\newtheorem{thm}{Theorem}
\newtheorem{prop}[thm]{Proposition}
\newtheorem{lem}[thm]{Lemma}
\newtheorem{cor}[thm]{Corollary}
\theoremstyle{definition}
\newtheorem{defn}[thm]{Definition}
\theoremstyle{remark}
\newtheorem{rem}[thm]{Remark}
\newtheorem{exam}[thm]{Example}
\numberwithin{equation}{section}
\numberwithin{thm}{section}
\newcounter{stepcounter}
\newcommand{\step}[1]{\refstepcounter{stepcounter} {\sl Step (\roman{stepcounter}) --- #1}}
\newcommand{\ZZ}{\mathbb{Z}}
\newcommand{\PP}{\mathcal{P}}
\newcommand{\RR}{\mathbb{R}}
\newcommand{\CC}{\mathbb{C}}
\newcommand{\FF}{\mathbb{F}}
\newcommand{\NN}{\mathbb{N}}
\newcommand{\QQ}{\mathbb{Q}}
\DeclareMathOperator{\Mint}{\mathsf{M}}
\newcommand{\Mss}{\mathop{\mathsf{M}_{\mathrm{SS}}}}
\DeclareMathOperator{\TT}{\mathsf{T}}
\newcommand{\TTlong}{\mathop{\mathsf{T}_{\mathrm{long}}}}
\newcommand{\TTshort}{\mathop{\mathsf{T}_{\mathrm{short}}}}
\newcommand{\MMshort}{\mathop{\mathsf{M}_{\mathrm{short}}}}
\newcommand{\MMbivariate}{\mathop{\mathsf{M}_{\mathrm{bivariate}}}}
\newcommand{\MMbivariateprime}{\mathop{\mathsf{M}'_{\mathrm{bivariate}}}}
\newcommand{\divides}{\mathrel|}
\newcommand{\ndivides}{\mathrel\nmid}
\newcommand{\norm}[1]{\Vert #1 \Vert}
\renewcommand{\leq}{\leqslant}
\renewcommand{\geq}{\geqslant}
\begin{document}

\title[Faster integer multiplication using short lattice vectors]
      {Faster integer multiplication \\ using short lattice vectors}

\author{David Harvey}
\address{School of Mathematics and Statistics, University of New South Wales, Sydney NSW 2052, Australia}
\curraddr{}
\email{d.harvey@unsw.edu.au}
\thanks{Harvey was supported by the Australian Research Council (grants DP150101689 and FT160100219).}

\author{Joris van der Hoeven}
\address{CNRS, Laboratoire d'informatique, \'Ecole polytechnique, 91128 Palaiseau, France}
\curraddr{}
\email{vdhoeven@lix.polytechnique.fr}
\thanks{}

\date{}

\dedicatory{}

\begin{abstract}
We prove that $n$-bit integers may be multiplied in
$O(n \log n \, 4^{\log^* n})$ bit operations.
This complexity bound had been achieved previously by several authors,
assuming various unproved number-theoretic hypotheses.
Our proof is unconditional, and depends in an essential way on
Minkowski's theorem concerning lattice vectors in symmetric convex sets.
\end{abstract}

\maketitle

\bibliographystyle{amsplain}

\section{Introduction}

Let $\Mint(n)$ denote the number of bit operations required to
multiply two $n$-bit integers,
where ``bit operations'' means the number of steps on a deterministic
Turing machine with a fixed, finite number of tapes \cite{Pap-complexity}
(our results also hold in the Boolean circuit model).
Let $\log^* x$ denote the iterated natural logarithm, i.e.,
$\log^* x := \min\,\{ j \in \NN : \log^{\circ j} x \leq 1\}$,
where $\log^{\circ j} x := \log \cdots \log x$ (iterated~$j$ times).
The main result of this paper is an algorithm achieving the following
bound.
\begin{thm}
\label{thm:main}
We have $\Mint(n) = O(n \log n \, 4^{\log^* n})$.
\end{thm}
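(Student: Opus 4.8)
The plan is to follow the Sch\"onhage--Strassen lineage of fast multiplication algorithms, reducing integer multiplication to multiplication of polynomials over a suitable ring, and then to recursively reduce the polynomial multiplication to smaller instances of integer multiplication. The outer structure is that of the Harvey--van der Hoeven--Lecerf framework: to multiply $n$-bit integers, split the inputs into chunks and form a multiplication problem for polynomials of degree roughly $N$ with coefficients of $O(\log N)$ bits; multiply these polynomials using a multidimensional DFT over $\CC$ or over a well-chosen finite field; and recurse on the bit-size of the pieces. Each level of recursion reduces the problem size from $n$ to roughly $\log^2 n$ or so, so the recursion has depth about $\log^* n$, and each level contributes a constant factor; the crux is to make that constant equal to $4$, which requires the ``short transform'' to be computed at a cost of essentially $2n\log n$ bit operations plus lower-order terms, with a further factor of $2$ coming from the need to perform both a forward and an inverse transform.

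The key new ingredient, and the reason Minkowski's theorem appears, is the choice of the ring in which the DFT is performed. Earlier conditional proofs of the $O(n\log n\,4^{\log^* n})$ bound (due to F\"urer and to Harvey--van der Hoeven--Lecerf) needed Dirichlet-type hypotheses on primes in arithmetic progressions, or the existence of Fermat primes, in order to find a finite field or ring of the right size containing a root of unity of a highly composite order (a power of $2$, say) so that a genuinely fast FFT is available. I would instead work with the ring $\ZZ/m\ZZ$ or a cyclotomic ring for a cleverly chosen modulus, and use Minkowski's convex body theorem to produce a short vector in an appropriate lattice --- concretely, to find a small auxiliary multiplier or a short relation that lets one simulate a ``missing'' root of unity, or to reduce a multiplication modulo a general prime to a bivariate multiplication modulo a Fermat-like number where the fast transform is available unconditionally. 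The lattice would encode the divisibility or approximation constraints that the conditional arguments previously obtained from number theory, and Minkowski guarantees a lattice point in the symmetric convex set of admissible parameters without any unproved hypothesis.

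Concretely, the steps I would carry out are: (i) set up the reduction from $\Mint(n)$ to multiplication of integer polynomials of a prescribed degree with small coefficients, tracking constants carefully; (ii) introduce the bivariate (or multivariate) reformulation, writing the polynomial multiplication as a multiplication in $\ZZ[x,y]/(x^r - 1, \dots)$ or in a ring $\ZZ[\zeta]/(p)$ chosen so that the ambient transform length is a power of two; (iii) invoke Minkowski's theorem to select the precise modulus/root-of-unity data, proving that a short lattice vector in the relevant symmetric convex body yields parameters for which an unconditional fast FFT of the right size exists; (iv) analyze the cost of the resulting ``short transform'' $\TTshort$ and show it is $2n\log n + O(n\log n/\log\log n)$ or similar, so that two transforms plus a pointwise multiplication and the recursive cost give the recurrence $\Mint(n) \leq 4\,\Mint(O(\log^2 n)) \cdot n\log n / (\log n \cdot \text{stuff}) + \cdots$; and (v) solve the recurrence, verifying that the accumulated constant over the $\log^* n$ levels is $O(4^{\log^* n})$ and that the lower-order terms do not spoil the bound.

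The main obstacle I expect is step (iii)--(iv): arranging that the Minkowski-derived parameters genuinely support a transform fast enough to keep the per-level constant at exactly $4$ rather than some larger constant. It is easy to get \emph{some} unconditional algorithm of the shape $O(n\log n\,C^{\log^* n})$; squeezing $C$ down to $4$ requires that the short lattice vector be short enough that the auxiliary overhead (the cost of the extra bivariate structure, the loss from working modulo a not-quite-ideal modulus, the bookkeeping of carries between chunks) is absorbed into lower-order terms. Controlling the interaction between the Minkowski bound --- which gives a vector of a certain norm but not much finer information --- and the precise arithmetic cost of the induced FFT is the delicate heart of the argument; everything else is either standard FFT bookkeeping or a routine, if careful, recurrence analysis.
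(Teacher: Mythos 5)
Your high-level picture of the recursion (depth $\log^* n$, constant $4$, two surviving factors of $2$ from forward/inverse transforms and from coefficient growth) is right, and you correctly sense that Minkowski is the novelty that removes the need for unproved conjectures. But you have misidentified what Minkowski's theorem actually does in the argument, and the role you propose for it would not work.

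You suggest using Minkowski to ``simulate a missing root of unity'' or to ``reduce a multiplication modulo a general prime to a bivariate multiplication modulo a Fermat-like number,'' i.e.\ to replace the density-of-primes hypotheses of the conditional algorithms. That is not what happens, and I don't see how a lattice argument could do it: the existence of primes $p \equiv 1 \pmod L$ of controlled size is a genuinely number-theoretic fact, and the paper still needs it --- it gets it unconditionally from Linnik's theorem (with Xylouris's exponent $5.2$), not from Minkowski. The roots of unity are not missing; they exist in $\ZZ/p\ZZ$ and are used directly. What the conditional algorithms really needed from their special primes (Mersenne, generalised Fermat, FFT primes) was a cheap way to \emph{split} an element of $\ZZ/q\ZZ$ into $m$ small chunks so that multiplication mod $q$ becomes polynomial multiplication of degree $m$ rather than $2m$. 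That splitting is what Minkowski provides, for an essentially arbitrary modulus $q$ admitting a $2m$-th root of unity $\theta$: one considers the lattice $\Lambda \subset \ZZ^m$ of vectors $(a_0,\dots,a_{m-1})$ with $a_0 + a_1\theta + \cdots + a_{m-1}\theta^{m-1} \equiv 0 \pmod q$, which has covolume $q$, and Minkowski guarantees a nonzero lattice vector with all coordinates at most $q^{1/m}$. This yields a short polynomial $P(y)$ with $P(\theta) \equiv 0 \pmod q$, and from it a ``$\theta$-representation'' of $\ZZ/q\ZZ$: elements are stored as polynomials in $\theta$ of degree $< m$ with coefficients of size about $q^{1/m}$. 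Multiplication in this representation is a polynomial product in $\ZZ[y]/(y^m+1)$ followed by a Montgomery-style reduction using a precomputed inverse $J = P^{-1}$ modulo an auxiliary prime $r$; that reduction, not any choice of modulus, is the delicate piece you'd need to supply. Without the $\theta$-representation idea and the associated reduction algorithm, your steps (iii)--(iv) have no content, and your plan would collapse back to the conditional algorithms or to a worse constant.

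A secondary point: the paper does not recurse from polynomial multiplication ``back to integer multiplication'' at each level. The recursive object is a length-$L$ DFT over $\ZZ/q\ZZ$, reduced via Cooley--Tukey to short length-$S$ DFTs, then via Bluestein to short cyclic convolutions, then --- using the $\theta$-representation --- to bivariate products in $\ZZ[x,y]/(x^S-1, y^m+1)$, and finally to length-$S$ DFTs over a much smaller $\ZZ/q'\ZZ$. Integer multiplication is invoked only at the top level (via a multivariate Kronecker splitting and CRT over several primes $q_i$, chosen so the precomputations stay cheap). Your outline of ``FFT over $\CC$ or a finite field, then recurse on the bit-size of the pieces'' is closer to the older frameworks and would not by itself deliver $K=4$.
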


The first complexity bound for $\Mint(n)$ of the form
$O(n \log n \, K^{\log^* n})$ was established by F\"urer
\cite{Fur-faster1, Fur-faster2}, for an unspecified constant $K > 1$.
His algorithm reduces a~multiplication of size $n$ to
many multiplications of size exponentially smaller than~$n$,
which are then handled recursively.
The number of recursion levels is $\log^* n + O(1)$,
and the constant~$K$ measures the ``expansion factor''
at each recursion level.

The first explicit value for $K$, namely $K = 8$,
was given by Harvey, van der Hoeven and Lecerf \cite{HvdHL-mul}.
Their method is somewhat different to F\"urer,
but still carries out an exponential size reduction at each recursion level.
One may think of the constant $K = 8$ as being built up of
three factors of $2$, each coming from a~different source.

The first factor of $2$ arises from the need to perform both
forward and inverse DFTs (discrete Fourier transforms) at each recursion level.
This is a feature common to all of the post-F\"urer algorithms,
suggesting that significantly new ideas will be needed to do any better than
$O(n \log n \, 2^{\log^* n})$.

The second factor of $2$ arises from coefficient growth:
a product of polynomials with $k$-bit integer coefficients
has coefficients with at least $2k$ bits.
This factor of $2$ also seems difficult to completely eliminate,
although Harvey and van der Hoeven have recently
made some progress \cite{HvdH-cyclotomic}:
they achieve $K = 4\sqrt 2 \approx 5.66$ by arranging that,
in effect, the coefficient growth only occurs at every
second recursion level.
This was the best known unconditional value of $K$ prior to the present paper.

The final factor of $2$ occurs because the algorithm works over $\CC$:
when multiplying complex coefficients with say $\beta$ significant bits,
the algorithm first computes a full $2\beta$-bit product,
and then truncates to $\beta$ bits.
More precisely, after splitting the $\beta$-bit inputs into
$m$ exponentially smaller chunks,
and encoding them into polynomials of degree $m$,
the algorithm must compute the full product of degree~$2m$,
even though essentially only $m$ coefficients are needed to resolve
$\beta$ significant bits of the product.
Again, this factor of $2$ has been the subject of a recent attack:
Harvey has shown \cite{Har-truncmul} that it is possible to work modulo
a polynomial of degree only $m$,
at the expense of increasing the working precision by a factor of $3/2$.
This leads to an integer multiplication algorithm achieving $K = 6$.

Another way of attacking this last factor of $2$ is to replace
the coefficient ring~$\CC$ by a finite ring $\ZZ/q\ZZ$
for a suitable integer $q$.
By choosing $q$ with some special structure,
it may become possible to convert a multiplication modulo $q$ directly
into a polynomial multiplication modulo
some polynomial of degree $m$, rather than~$2m$.
Three algorithms along these lines have been proposed.

First, Harvey, van der Hoeven and Lecerf suggested using
\emph{Mersenne primes}, i.e., primes of the form $q = 2^k - 1$,
where $k$ is itself prime \cite[\S9]{HvdHL-mul}.
They convert multiplication in $\ZZ/q\ZZ$ to multiplication in
$\ZZ[y]/(y^m - 1)$, where $m$ is a power of two.
Because $k$ is not divisible by $m$,
the process of splitting an element of $\ZZ/q\ZZ$ into $m$ chunks is
somewhat involved,
and depends on a variant of the Crandall--Fagin algorithm \cite{CF-DWT}.

Covanov and Thom\'e \cite{CT-zmult} later proposed using
\emph{generalised Fermat primes}, i.e., primes of the form $q = r^m + 1$,
where $m$ is a power of two and $r$ is a small even integer.
Here, multiplication in $\ZZ/q\ZZ$ is converted to multiplication in
$\ZZ[y]/(y^m + 1)$.
The splitting procedure consists of rewriting an element of $\ZZ/q\ZZ$
in base $r$, via fast radix-conversion algorithms.

Finally, Harvey and van der Hoeven \cite{HvdH-vanilla} proposed using
\emph{FFT primes}, i.e., primes of the form $q = a \cdot 2^k + 1$,
where $a$ is small.
They reduce multiplication in $\ZZ/q\ZZ$ to multiplication
in $\ZZ[y]/(y^m + a)$ via a straightforward splitting of the integers
into~$m$ chunks, where $m$ is a power of two.
Here the splitting process is trivial,
as $k$ may be chosen to be divisible by $m$.

These three algorithms all achieve $K = 4$, subject to plausible but
unproved conjectures on the distribution of the relevant primes.
Unfortunately, in all three cases, it is not even known that there are
infinitely many primes of the required form,
let alone that there exist a sufficiently high density of them
to satisfy the requirements of the algorithm.

The main technical novelty of the present paper is a splitting procedure
that works for an almost \emph{arbitrary} modulus $q$.
The core idea is to introduce an alternative representation for
elements of $\ZZ/q\ZZ$:
we represent them as expressions
$a_0 + a_1 \theta + \cdots + a_{m-1} \theta^{m-1}$,
where $\theta$ is some fixed $2m$-th root of unity in $\ZZ/q\ZZ$,
and where the $a_i$ are small integers, of size roughly $q^{1/m}$.
Essentially the only restriction on $q$ is that $\ZZ/q\ZZ$ must contain an
appropriate $2m$-th root of unity.
We will see that Linnik's theorem is strong enough to construct
suitable such moduli $q$.

In Section \ref{sec:theta} we show that the cost of multiplication
in this representation is only a constant factor worse than for
the usual representation.
The key ingredient is Minkowski's theorem on lattice vectors
in symmetric convex sets.
We also give algorithms for converting between this representation and
the standard representation.
The conversions are not as fast as one might hope ---
in particular, we do not know how to carry them out in quasilinear time ---
but surprisingly this turns out not to affect the overall complexity,
because in the main multiplication algorithm we perform
the conversions only infrequently.

Then in Sections \ref{sec:main} and \ref{sec:top}
we prove Theorem \ref{thm:main},
using an algorithm that is structurally very similar to \cite{HvdH-vanilla}.
We make no attempt to minimise the implied big-$O$ constant in
Theorem \ref{thm:main};
our goal is to give the simplest possible proof of the asymptotic bound,
without any regard for questions of practicality.

An interesting question is whether it is possible to combine the techniques
of the present paper with those of
\cite{HvdH-cyclotomic} to obtain an algorithm
achieving $K = 2 \sqrt 2 \approx 2.83$.
Our attempts in this direction have so far been unsuccessful.
One might also ask if the techniques of this paper can be transferred
to the case of multiplication of polynomials of high degree in $\FF_p[x]$.
However, this is not so interesting, because an unconditional proof of
the bound corresponding to $K = 4$ in the polynomial case is already known
\cite{HvdH-cyclotomic}.

Throughout the paper we use the following notation.
We write $\lg n := \lceil \log_2 n \rceil$ for $n \geq 2$,
and for convenience put $\lg 1 := 1$.
We define $\Mss(n) = C n \lg n \lg \lg n$,
where $C > 0$ is some constant so that the Sch\"onhage--Strassen algorithm
multiplies $n$-bit integers in at most $\Mss(n)$ bit operations
\cite{SS-multiply}.
This function satisfies $n \Mss(m) \leq \Mss(nm)$ for any $n, m \geq 1$, 
and also $\Mss(dm) = O(\Mss(m))$ for fixed~$d$.
An $n$-bit integer may be divided by $m$-bit integer,
producing quotient and remainder, in time $O(\Mss(\max(n, m)))$
\cite[Ch.~9]{vzGG-compalg3}.
We may transpose an $n \times m$ array of objects of bit size $b$ in
$O(b n m \lg \min(n, m))$ bit operations \cite[Appendix]{BGS-recurrences}.
Finally, we occasionally use Xylouris's refinement of Linnik's theorem
\cite{Xyl-linnik},
which states that for any relatively prime positive integers $a$ and $n$,
the least prime in the arithmetic progression $p = a \pmod n$
satisfies $p = O(n^{5.2})$.

\section{$\theta$-representations}
\label{sec:theta}

Throughout this section,
fix an integer $q \geq 2$ and a power of two $m$ such that
\begin{equation}
\label{eq:m-bound}
 m \leq \frac{\log_2 q}{(\lg \lg q)^2},
 \qquad \text{or equivalently,} \qquad
 q^{1/m} \geq 2^{(\lg \lg q)^2},
\end{equation}
and assume we are given some $\theta \in \ZZ/q\ZZ$ such that $\theta^m = -1$.

For a polynomial
$F = F_0 + F_1 y + \cdots + F_{m-1} y^{m-1} \in \ZZ[y]/(y^m + 1)$,
define $\norm{F} := \max_i |F_i|$.
This norm satisfies $\norm{FG} \leq m \norm F \norm G$
for any $F, G \in \ZZ[y]/(y^m + 1)$.
\begin{defn}
Let $u \in \ZZ/q\ZZ$.
A \emph{$\theta$-representation for $u$} is a polynomial
$U \in \ZZ[y]/(y^m + 1)$ such that $U(\theta) = u \pmod q$ and
$\norm{U} \leq m q^{1/m}$.
\end{defn}
\begin{exam}
\label{exam:theta}
Let $m = 4$ and
\begin{align*}
      q & = 3141592653589793238462833, \\
 \theta & = 2542533431566904450922735 \pmod q, \\
      u & = 2718281828459045235360288 \pmod q.
\end{align*}
The coefficients in a $\theta$-representation must not exceed
$m q^{1/m} \approx 5325341.46$.
Two examples of $\theta$-representations for $u$ are
\begin{align*} 
 U(y) & = -3366162y^3 + 951670y^2 - 5013490y - 3202352, \\
 U(y) & = -4133936y^3 + 1849981y^2 - 5192184y + 1317423.
\end{align*}
\end{exam}

By \eqref{eq:m-bound}, the number of bits required to store $U(y)$ is at most
\begin{equation*}
 m \big(\log_2(m q^{1/m}) + O(1)\big) = \lg q + O(m \lg m) = \left(1 + \frac{O(1)}{\lg \lg q}\right) \lg q,
\end{equation*}
so a $\theta$-representation incurs very little overhead in space
compared to the standard representation by an integer in the interval
$0 \leq x < q$.

Our main tool for working with $\theta$-representations is the
\emph{reduction algorithm} stated in Lemma \ref{lem:reduction} below.
Given a polynomial $F \in \ZZ[y]/(y^m + 1)$,
whose coefficients are up to about twice as large as allowed
in a $\theta$-representation,
the reduction algorithm computes a $\theta$-representation for $F(\theta)$
(up to a certain scaling factor, discussed further below).
The basic idea of the algorithm is to precompute a nonzero polynomial $P(y)$
such that $P(\theta) = 0 \pmod q$,
and then to subtract an appropriate multiple of $P(y)$ from $F(y)$
to make the coefficients small.

After developing the reduction algorithm, we are able to give algorithms for
basic arithmetic on elements of $\ZZ/q\ZZ$ given in $\theta$-representation
(Proposition \ref{prop:arithmetic}),
a more general reduction algorithm for inputs of arbitrary size
(Proposition \ref{prop:reduction}),
and algorithms for converting between standard and $\theta$-representation
(Proposition \ref{prop:convert-to-theta} and
Proposition \ref{prop:convert-to-standard}).

We begin with two results that generate certain precomputed data necessary
for the main reduction step.
\begin{lem}
\label{lem:compute-P}
In $q^{1+o(1)}$ bit operations,
we may compute a nonzero polynomial $P \in \ZZ[y]/(y^m + 1)$ such that
$P(\theta) = 0 \pmod q$ and $\norm{P} \leq q^{1/m}$.
\end{lem}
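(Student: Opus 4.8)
The plan is to obtain $P$ as a difference $F-F'$ of two polynomials with small nonnegative coefficients that have the same image under evaluation at $\theta$; the existence of such a collision is just the pigeonhole principle (equivalently, Minkowski's theorem for the cube $\norm{\cdot}\le q^{1/m}$ and the lattice $L=\{(G_0,\dots,G_{m-1})\in\ZZ^m:\sum_i G_i\theta^i\equiv 0\pmod q\}$, whose covolume is $q$). Concretely, set $N:=\lfloor q^{1/m}\rfloor$, so that $N+1>q^{1/m}$ and hence $(N+1)^m>q$; note that \eqref{eq:m-bound} forces $2^m\le q$, hence $q^{1/m}\ge 2$ and $N\ge 2$, so this is not vacuous. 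There are $(N+1)^m>q$ polynomials $F\in\ZZ[y]/(y^m+1)$ all of whose coefficients lie in $\{0,1,\dots,N\}$, so two distinct such $F\ne F'$ must satisfy $F(\theta)\equiv F'(\theta)\pmod q$; then $P:=F-F'$ is nonzero, satisfies $P(\theta)\equiv 0\pmod q$, and has $\norm P\le N\le q^{1/m}$, exactly as required.

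To turn this into an algorithm I would first precompute the residues $\theta^i\bmod q$ and $N\theta^i\bmod q$ for $0\le i<m$, at a cost of $q^{o(1)}$ bit operations, and then enumerate the $(N+1)^m$ tuples $(F_0,\dots,F_{m-1})\in\{0,\dots,N\}^m$ in mixed-radix counter order, maintaining alongside each tuple the running value $\sum_i F_i\theta^i\bmod q$. A single counter step modifies $O(1)$ coordinates on average (and $O(m)$ in the worst case), but the total number of coordinate modifications over the whole enumeration is $O((N+1)^m)$, and each modification updates the running value by adding or subtracting a precomputed residue modulo $q$ in $O(\lg q)$ bit operations. I would then write out the list of $(N+1)^m$ pairs (tuple, value), sort it by value (say by radix sort), find a pair of equal values, and subtract the corresponding tuples to get $P$. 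Each record has bit size $O(\lg q)=q^{o(1)}$, so this sort, and hence the whole computation, costs $q^{1+o(1)}$ bit operations on a multitape Turing machine, \emph{provided} $(N+1)^m=q^{1+o(1)}$.

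The one genuinely delicate point — and the only place the hypothesis \eqref{eq:m-bound} is used — is precisely that last estimate. Here $q^{1/m}\ge 2^{(\lg\lg q)^2}$ by \eqref{eq:m-bound}, and $2^{(\lg\lg q)^2}$ dominates every fixed power of $\lg q$, so $m/q^{1/m}\to 0$ and
\[
 (N+1)^m\le (q^{1/m}+1)^m=q\,(1+q^{-1/m})^m\le q\,\exp\!\big(m\,q^{-1/m}\big)=q^{1+o(1)}.
\]
Thus rounding $q^{1/m}$ up to the integer $N+1$ does not inflate the number of candidate polynomials past $q^{1+o(1)}$; this would fail if $m$ were allowed to be comparable to $\lg q$, which is exactly why \eqref{eq:m-bound} is imposed. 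Everything else is routine bookkeeping, so I expect no real obstacle beyond carefully checking this bound and confirming that the enumeration and sort genuinely fit within $q^{1+o(1)}$ steps on a multitape machine.
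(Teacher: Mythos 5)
Your proof is correct, and the existence argument is in substance the same as the paper's: you appeal to pigeonhole on the $(N+1)^m>q$ nonnegative tuples, which, as you note, is precisely Minkowski's theorem for the cube applied to the lattice of relations among the $\theta^i$ modulo $q$ (the paper invokes Minkowski by name, you reprove the box case). Where you diverge is in the algorithm: the paper simply enumerates the roughly $(2q^{1/m}+1)^m=q^{1+o(1)}$ candidate polynomials with coefficients in $[-q^{1/m},q^{1/m}]$ in lexicographic order, incrementally updating $P(\theta)\bmod q$ at amortized $O(\lg q)$ per step, and stops when it hits a nonzero $P$ with $P(\theta)\equiv 0$; you instead enumerate the $(N+1)^m$ nonnegative tuples, record all values, radix-sort to locate a collision $F,F'$, and output $P=F-F'$. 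Both are within $q^{1+o(1)}$ (your sort costs $q(\lg q)^{O(1)}=q^{1+o(1)}$ on a multitape machine), but the paper's direct zero-search is simpler since it avoids storing and sorting $\Theta(q)$ records. Your careful verification that $(N+1)^m=q^{1+o(1)}$ via $m\,q^{-1/m}\to 0$ is a sound and in fact slightly sharper version of the paper's cruder bound $3^m q$.
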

\begin{proof}
We first establish existence of a suitable $P(y)$.
Let $\overline{\theta^i}$ denote a lift of $\theta^i$ to $\ZZ$,
and consider the lattice $\Lambda \subset \ZZ^m$ spanned by the
rows of the $m \times m$ integer matrix
\begin{equation*}
   A =
   \begin{pmatrix}
      q                        & 0 & 0 &        & 0 \\
      -\overline{\theta}       & 1 & 0 & \cdots & 0 \\
      -\overline{\theta^2}     & 0 & 1 &        & 0 \\
      \vdots                   &   &   & \ddots &   \\
      -\overline{\theta^{m-1}} & 0 & 0 & \cdots & 1
   \end{pmatrix}
\end{equation*}
Every vector $(a_0, \ldots, a_{m-1}) \in \Lambda$ satisfies the equation
$a_0 + \cdots + a_{m-1} \theta^{m-1} = 0 \pmod q$.
The volume of the fundamental domain of $\Lambda$ is $\det A = q$.
The volume of the closed convex symmetric set
$\Sigma := \{|a_i| \leq q^{1/m}\} \subset \RR^m$ is $(2q^{1/m})^m = 2^m q$,
so by Minkowski's theorem (see for example \cite[Ch.~V, Thm.~3]{Lan-ANT}),
there exists a nonzero vector $(a_0, \ldots, a_{m-1})$
in $\Lambda \cap \Sigma$.
The corresponding polynomial $P(y) := a_0 + \cdots + a_{m-1} y^{m-1}$
then has the desired properties.

To actually compute $P(y)$, we simply perform a brute-force search.
By \eqref{eq:m-bound} there are at most
$(2q^{1/m} + 1)^m \leq (3 q^{1/m})^m = 3^m q < q^{1+o(1)}$ candidates to test.
Enumerating them in lexicographical order,
we can easily evaluate $P(\theta) \pmod q$ in an average of
$O(\lg q)$ bit operations per candidate.
\end{proof}
\begin{exam}
Continuing Example \ref{exam:theta}, the coefficients of $P(y)$ must not
exceed $q^{1/m} \approx 1331335.36$.
A suitable polynomial $P(y)$ is given by
 \[ P(y) = -394297y^3 - 927319y^2 + 1136523y - 292956. \]
\end{exam}

\begin{rem}
The computation of $P(y)$ is closely related to the problem of finding an
element of small norm in the ideal of the ring $\ZZ[\zeta_{2m}]$ generated by
$q$ and $\zeta_{2m} - \overline\theta$,
where $\zeta_{2m}$ denotes a primitive $2m$-th root of unity.
\end{rem}

\begin{rem}
The poor exponential-time complexity of Lemma \ref{lem:compute-P}
can probably be improved,
by taking advantage of more sophisticated lattice reduction or
shortest vector algorithms,
but we were not easily able to extract a suitable result from the literature.
For example, LLL is not guaranteed to produce a short enough vector
\cite{LLL-factoring},
and the Micciancio--Voulgaris exact shortest vector algorithm \cite{MV-voronoi}
solves the problem for the Euclidean norm rather than the uniform norm.
In any case, this has no effect on our main result.
\end{rem}

\begin{lem}
\label{lem:compute-J}
Assume that $P(y)$ has been precomputed as in Lemma \ref{lem:compute-P}.
Let $r$ be the smallest prime exceeding $2 m^2 q^{1/m}$ such that
$r \ndivides q$ and such that $P(y)$ is invertible in
$(\ZZ/r\ZZ)[y]/(y^m + 1)$.
Then $r = O(m^2 q^{1/m})$, and in $q^{1+o(1)}$ bit operations we may
compute $r$ and a polynomial $J \in \ZZ[y]/(y^m + 1)$ such that
$J(y) P(y) = 1 \pmod r$ and $\norm J \leq r$.
\end{lem}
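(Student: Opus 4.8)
The plan is to first establish, non-constructively, that a prime $r$ with the required properties exists and satisfies $r = O(m^2 q^{1/m})$, and only then to locate both $r$ and the inverse $J$ by a brute-force search, much as in Lemma~\ref{lem:compute-P}.

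For the existence claim I would work with the integer $R := \mathrm{Res}(y^m + 1, P) = \prod_{\zeta^m = -1} P(\zeta)$ (the resultant; the product runs over the $m$ complex roots of $y^m+1$). Two observations drive the argument. First, $R \neq 0$: since $m$ is a power of two, $y^m+1 = \Phi_{2m}(y)$ is irreducible over $\QQ$, and as $P$ is nonzero of degree at most $m-1 < m$ it cannot be divisible by $y^m+1$, so $\gcd(P, y^m+1) = 1$ in $\QQ[y]$. Second, since $y^m+1$ is monic, $R$ reduces correctly modulo any prime $r$, and $P$ is invertible in $(\ZZ/r\ZZ)[y]/(y^m+1)$ exactly when $r \ndivides R$. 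I would then bound $|P(\zeta)| \leq m\norm P \leq m q^{1/m}$, whence $|R| \leq (m q^{1/m})^m = m^m q$; and \eqref{eq:m-bound}, which gives $m \lg m \leq (\log_2 q)/\lg \lg q$, yields $m^m = q^{o(1)}$ and so $|R| \leq q^{1+o(1)}$. The heart of the matter is now a counting estimate: $Rq$ has at most $\log_2(|R|q) = O(\lg q)$ distinct prime divisors, while \eqref{eq:m-bound} forces $2m^2 q^{1/m} \geq 2q^{1/m} \geq 2^{(\lg \lg q)^2 + 1}$, which grows faster than every power of $\lg q$. Hence, by the prime number theorem (Chebyshev's bound suffices), the interval $\big(2m^2 q^{1/m},\, 4m^2 q^{1/m}\big]$ contains on the order of $(m^2 q^{1/m})/\lg q$ primes, far more than the $O(\lg q)$ forbidden ones once $q$ is large; so at least one prime there avoids every prime factor of $Rq$, and the least such prime $r$ satisfies $r \leq 4 m^2 q^{1/m} = O(m^2 q^{1/m})$ (small $q$ being absorbed into the constant).

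For the algorithm I would enumerate $n = \lceil 2 m^2 q^{1/m}\rceil,\ \lceil 2 m^2 q^{1/m}\rceil + 1,\ \ldots$ in order, and for each $n$: test whether $n$ is prime (in $(\lg q)^{O(1)}$ bit operations, since $n$ has $O(\lg q)$ bits one may use a deterministic polynomial-time primality test); if so, test whether $n \ndivides q$ by one division; and if so, run the extended Euclidean algorithm on $P(y)$ and $y^m+1$ in $\FF_n[y]$, costing $O(m^2)$ operations in $\FF_n$, i.e. $(\lg q)^{O(1)}$ bit operations. This last step either certifies $\gcd(P, y^m+1) \neq 1$ (discard $n$) or returns $J$ with $J(y)P(y) = 1 \pmod n$, at which point I set $r := n$, take the coefficients of $J$ in $\{0, \ldots, r-1\}$ so that $\norm J \leq r$, and halt. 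By the previous paragraph this terminates with $r \leq 4 m^2 q^{1/m}$ after $O(m^2 q^{1/m})$ iterations, each costing $(\lg q)^{O(1)}$ bit operations; since $m \leq \lg q$ and $q^{1/m} \leq q$, the total is $m^2 q^{1/m} \cdot (\lg q)^{O(1)} = q^{1+o(1)}$. I expect the existence half to be the only real obstacle --- specifically, verifying that $R$ is genuinely nonzero (this is precisely where the irreducibility of $y^m+1$, hence the hypothesis that $m$ is a power of two, is needed) and that \eqref{eq:m-bound} makes the supply of candidate primes dwarf the $O(\lg q)$ prime factors of $Rq$; the rest is routine bookkeeping.
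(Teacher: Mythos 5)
Your proposal is correct and follows essentially the same route as the paper: both introduce the resultant $R = \mathrm{Res}(y^m+1, P)$, observe $R \neq 0$ via irreducibility of $y^m+1$, bound $|R| \leq (mq^{1/m})^m$, invoke a prime-number-theorem estimate to find a prime in an interval of length $O(m^2 q^{1/m})$ avoiding $Rq$, and then locate $r$ and $J$ by exhaustive search in $q^{1+o(1)}$ time. The only cosmetic difference is that the paper compares $\sum_{2x \leq p \leq Cx} \log_2 p$ directly against $\log_2|Rq|$, whereas you count primes in $(2x,4x]$ against the number of distinct prime divisors of $Rq$ — an equivalent bookkeeping choice.
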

\begin{proof}
Let $R \in \ZZ$ be the resultant of $P(y)$
(regarded as a polynomial in $\ZZ[y]$) and $y^m + 1$.
The primes $r$ dividing $R$ are exactly the primes for which
$P(y)$ fails to be invertible in $(\ZZ/r\ZZ)[y]/(y^m + 1)$.
Therefore our goal is to find a prime $r > 2m^2 q^{1/m}$
such that $r \ndivides Rq$.

Since $m$ is a power of two,
$y^m + 1$ is a cyclotomic polynomial and hence irreducible in $\QQ[y]$.
Thus $y^m + 1$ and $P(y)$ have no common factor, and so $R \neq 0$.
Also, we have $R = \prod_\alpha P(\alpha)$ where
$\alpha$ runs over the complex roots of $y^m + 1$.
These roots all lie on the unit circle,
so $|P(\alpha)| \leq m \norm P \leq m q^{1/m}$,
and hence by \eqref{eq:m-bound} we obtain
$|Rq| \leq (m q^{1/m})^m q = m^m q^2 < q^3$.

On the other hand, the prime number theorem
(in the form $\sum_{p < x} \log p \sim x$,
see for example \cite[\S4.3]{Apo-analytic})
implies that there exists
an absolute constant $C > 2$ such that for any $x \geq 1$ we have
$\sum_{2x \leq p \leq Cx} \log_2 p \geq 3x$ (sum taken over primes).
Taking $x := m^2 q^{1/m}$, by \eqref{eq:m-bound} again we get
\begin{equation*}
  \sum_{2m^2 q^{1/m} \leq p \leq Cm^2 q^{1/m}} \log_2 p
    \geq 3 m^2 q^{1/m}
    \geq 3 \cdot 2^{(\lg \lg q)^2}
    \geq 3 \lg q
    \geq \log_2(q^3).
\end{equation*}
In particular, there must be at least one prime in the interval
$2 m^2 q^{1/m} \leq r \leq C m^2 q^{1/m}$ that does not divide $Rq$.

To find the smallest such $r$, we first make a list of all primes up to
$C m^2 q^{1/m}$ in $(C m^2 q^{1/m})^{1+o(1)} < q^{1+o(1)}$ bit operations.
Then for each prime $r$ between $2 m^2 q^{1/m}$ and $C m^2 q^{1/m}$,
we check whether $r$ divides $q$ in $(\lg q)^{1+o(1)}$ bit operations,
and attempt to invert $P(y)$ in $(\ZZ/r\ZZ)[y]/(y^m + 1)$ in
$(m \lg r)^{1+o(1)} = (\lg q)^{1+o(1)}$ bit operations
\cite[Ch.~11]{vzGG-compalg3}.
\end{proof}

\begin{exam}
Continuing Example \ref{exam:theta}, we have $r = 42602761$ and
 \[ J(y) = 17106162y^3 + 6504907y^2 + 30962874y + 8514380. \]
\end{exam}
Now we come to the main step of the reduction algorithm, which is inspired by
Montgomery's method for modular reduction \cite{Mon-modular}.
\begin{lem}
\label{lem:reduction}
Assume that $P(y)$, $r$ and $J(y)$ have been precomputed as in
Lemmas~\ref{lem:compute-P} and~\ref{lem:compute-J}.
Given as input $F \in \ZZ[y]/(y^m + 1)$ with $\norm F \leq m^3 (q^{1/m})^2$,
we may compute a $\theta$-representation for $F(\theta)/r \pmod q$
in $O(\Mss(\lg q))$ bit operations.
\end{lem}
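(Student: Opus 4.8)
The idea is to follow Montgomery's reduction strategy, transplanted to the ring $\ZZ[y]/(y^m+1)$ with ``modulus'' $P(y)$ and working prime $r$. Given $F$ with $\norm F \le m^3(q^{1/m})^2$, I want to produce $U \in \ZZ[y]/(y^m+1)$ with $U(\theta) = F(\theta)/r \pmod q$ and $\norm U \le m q^{1/m}$. First I would compute $W := F J \bmod r \in \ZZ[y]/(y^m+1)$, choosing the representative with $\norm W \le r$ (entrywise in $[0,r)$ or in $(-r/2,r/2]$, whichever is cleanest). Then I would form $G := F + W P \in \ZZ[y]/(y^m+1)$, working over $\ZZ$; since $JP \equiv 1 \pmod r$ we have $G \equiv F + FJP \equiv 2F \cdot$—no, more carefully: $W P \equiv F J P \equiv F \pmod r$ is wrong in the wrong direction, so I would instead set it up so that $W$ is chosen with $W \equiv -F J \pmod r$, giving $G = F + WP \equiv F - FJP \equiv F - F = 0 \pmod r$, i.e. every coefficient of $G$ is divisible by $r$. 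The output is then $U := G/r$, computed by dividing each coefficient of $G$ by $r$ over $\ZZ$.

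The key verifications are correctness and the norm bound. For correctness: $P(\theta) = 0 \pmod q$, so $G(\theta) = F(\theta) + W(\theta)P(\theta) \equiv F(\theta) \pmod q$, whence $U(\theta) = G(\theta)/r \equiv F(\theta)/r \pmod q$ (using that $r \ndivides q$, so $r$ is invertible mod $q$; this is exactly why Lemma~\ref{lem:compute-J} insists $r \ndivides q$). For the norm bound: using $\norm{AB} \le m \norm A \norm B$ in $\ZZ[y]/(y^m+1)$, together with $\norm W \le r$, $\norm P \le q^{1/m}$, $\norm F \le m^3 (q^{1/m})^2$, and $r = O(m^2 q^{1/m})$ from Lemma~\ref{lem:compute-J}, I get
\begin{equation*}
 \norm G \le \norm F + m \norm W \norm P \le m^3 (q^{1/m})^2 + m \cdot O(m^2 q^{1/m}) \cdot q^{1/m} = O(m^3 (q^{1/m})^2),
\end{equation*}
hence $\norm U = \norm G / r = O(m^3 (q^{1/m})^2)/\Omega(m^2 q^{1/m}) = O(m q^{1/m})$. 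Here is where I expect the main friction: the implied constants must actually land below $1$ in the bound $\norm U \le m q^{1/m}$, so I would need to track the constant in $r = O(m^2 q^{1/m})$ and pick the prime $r$ large enough (the ``$2m^2 q^{1/m}$'' lower bound in Lemma~\ref{lem:compute-J} was presumably chosen for exactly this reason) and use \eqref{eq:m-bound} to absorb the $m^3$ versus $r$-in-the-denominator discrepancy — since $r > 2m^2 q^{1/m}$ gives $\norm G / r < m^3(q^{1/m})^2/(2m^2 q^{1/m}) + O(q^{1/m}) = \tfrac12 m q^{1/m} + O(q^{1/m}) \le m q^{1/m}$ once $q$ is large. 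For small $q$ the statement is vacuous or handled by adjusting constants.

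Finally, the complexity. All arithmetic happens on polynomials in $\ZZ[y]/(y^m+1)$ whose coefficients have $O(\lg q)$ bits (for $F$, $P$, $W$) or slightly more (for $G$, still $O(\lg q)$ bits by the bound above, and $O(m \lg m) = o(\lg q)$ overhead is swamped). Computing $FJ$ is a polynomial multiplication of degree $< m$ with $O(\lg q)$-bit coefficients, reducible to a single integer multiplication of $O(\lg q)$-bit integers by Kronecker substitution, costing $O(\Mss(\lg q))$; reducing mod $y^m+1$ and mod $r$ is cheap; computing $WP$ is another such product; the additions and the coefficientwise division by $r$ cost $O(m \lg q) = o(\Mss(\lg q))$. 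So the total is $O(\Mss(\lg q))$ as claimed. I would remark that $P$, $r$, $J$ are precomputed, so their (expensive) construction cost is not counted here.
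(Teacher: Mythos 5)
Your approach is the same Montgomery-style reduction the paper uses: form the quotient $W \equiv -FJ \pmod r$, observe that $G := F + WP$ has every coefficient divisible by $r$, and output $G/r$. (The paper writes $Q := FJ \bmod r$ and $G := (F - QP)/r$, identical up to sign.) The correctness argument via $P(\theta) = 0 \pmod q$ and the complexity analysis via Kronecker substitution also match the paper.

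One small but genuine fix is needed in the norm bound. You present $[0,r)$ and $(-r/2,r/2]$ as equally valid residue ranges for $W$, and your final arithmetic writes the contribution of $m\norm W\norm P/r$ as $O(q^{1/m})$, which is off by a factor of $m$. The centered choice $\norm W \le r/2$ is in fact forced, not a matter of taste: with $\norm W \le r$ one only gets $m\norm W\norm P/r \le m q^{1/m}$, so the total is $\tfrac{3}{2} m q^{1/m}$, which exceeds the budget $m q^{1/m}$ in the definition of $\theta$-representation for every $q$, not just small ones. With $\norm W \le r/2$ the two contributions to $\norm{G}/r$ are each at most $\tfrac{1}{2} m q^{1/m}$ (using $r \ge 2m^2 q^{1/m}$ for the first), so $\norm{G/r} \le m q^{1/m}$ holds exactly, with no need to invoke ``$q$ large'' or \eqref{eq:m-bound} as you suggest.
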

\begin{proof}
We first compute the ``quotient'' $Q := FJ \pmod r$,
normalised so that $\norm Q \leq r/2$.
This is done by means of Kronecker substitution \cite[Ch.~8]{vzGG-compalg3},
i.e., we pack the polynomials $F(y)$ and $J(y)$ into integers,
multiply the integers, unpack the result,
and reduce the result modulo $y^m + 1$ and modulo $r$.
The packed integers have at most $m (\lg \norm F + \lg r + \lg m)$ bits,
where the $\lg m$ term accounts for coefficient growth in $\ZZ[y]$.
By \eqref{eq:m-bound} and Lemma \ref{lem:compute-J},
this simplifies to $O(\lg q)$ bits, so the integer multiplication step costs
$O(\Mss(\lg q))$ bit operations.
This bound also covers the cost of the reductions modulo $r$.

Next we compute the product $QP$, again using Kronecker substitution,
at a cost of $O(\Mss(\lg q))$ bit operations.
Since $\norm Q \leq r/2$ and $\norm P \leq q^{1/m}$,
we have $\norm{QP} \leq \frac12 r m q^{1/m}$.

By construction of $J$ we have $QP = F \pmod r$.
In particular, all the coefficients of $F - QP \in \ZZ[y]/(y^m + 1)$
are divisible by $r$.
The last step is to compute the ``remainder'' $G := (F - QP)/r$;
again, this step costs $O(\Mss(\lg q))$ bit operations.
Since $r \geq 2 m^2 q^{1/m}$, we have
\begin{equation*}
  \norm{G} \leq \frac{\norm{F}}{r} + \frac{\norm{QP}}{r}
           \leq \frac{m^3 (q^{1/m})^2}{2 m^2 q^{1/m}} + \frac{m q^{1/m}}{2}
           \leq m q^{1/m}.
\end{equation*}
Finally, since $P(\theta) = 0 \pmod q$,
and all arithmetic throughout the algorithm has been performed modulo $y^m + 1$,
we see that $G(\theta) = F(\theta)/r \pmod q$.
\end{proof}

Using the above reduction algorithm,
we may give preliminary addition and multiplication algorithms for
elements of $\ZZ/q\ZZ$ in $\theta$-representation.
\begin{lem}
\label{lem:arithmetic}
Assume that $P(y)$, $r$ and $J(y)$ have been precomputed as in
Lemmas~\ref{lem:compute-P} and~\ref{lem:compute-J}.
Given as input $\theta$-representations for $u, v \in \ZZ/q\ZZ$,
we may compute $\theta$-representations for $uv/r$ and $(u \pm v)/r$
in $O(\Mss(\lg q))$ bit operations.
\end{lem}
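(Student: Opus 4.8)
The plan is to reduce both operations directly to the reduction algorithm of Lemma~\ref{lem:reduction}. Write $U, V \in \ZZ[y]/(y^m+1)$ for the given $\theta$-representations, so that $U(\theta) = u \pmod q$, $V(\theta) = v \pmod q$, and $\norm U, \norm V \leq m q^{1/m}$.

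For the product, the first step is to form $F := UV$ in $\ZZ[y]/(y^m+1)$ via Kronecker substitution, exactly as in the proof of Lemma~\ref{lem:reduction}. The coefficients of $U$ and $V$ have $\lg m + (\lg q)/m + O(1)$ bits each, so by \eqref{eq:m-bound} the packed integers have $O(m\lg m + \lg q) = O(\lg q)$ bits and the multiplication (together with the reduction modulo $y^m+1$) costs $O(\Mss(\lg q))$. By the bound $\norm{FG} \leq m \norm F \norm G$ on $\ZZ[y]/(y^m+1)$ we get $\norm F \leq m \norm U \norm V \leq m^3 (q^{1/m})^2$, which is precisely the input hypothesis of Lemma~\ref{lem:reduction}. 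Applying that lemma produces a $\theta$-representation for $F(\theta)/r = U(\theta)V(\theta)/r = uv/r \pmod q$ in a further $O(\Mss(\lg q))$ bit operations.

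For the sum and difference, set $F := U \pm V$, which is computed coefficientwise in $O(\lg q)$ bit operations. Then $\norm F \leq 2 m q^{1/m}$, which is comfortably below the threshold $m^3 (q^{1/m})^2$ since $m \geq 1$ and $q \geq 2$, so Lemma~\ref{lem:reduction} applies once more and yields a $\theta$-representation for $F(\theta)/r = (u \pm v)/r \pmod q$ in $O(\Mss(\lg q))$ bit operations. In both cases the total is $O(\Mss(\lg q))$, as claimed.

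There is no real obstacle here: the argument is essentially bookkeeping. The only points requiring attention are verifying that the norm of $F$ falls within the hypothesis of Lemma~\ref{lem:reduction} and that forming $F$ stays within the $O(\Mss(\lg q))$ budget, both of which follow immediately from \eqref{eq:m-bound} and the submultiplicative estimate $\norm{FG} \leq m \norm F \norm G$.
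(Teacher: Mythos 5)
Your proof is correct and follows exactly the same route as the paper: form $UV$ by Kronecker substitution and $U\pm V$ coefficientwise, verify the norm bounds $\norm{UV}\leq m^3(q^{1/m})^2$ and $\norm{U\pm V}\leq 2mq^{1/m}\leq m^3(q^{1/m})^2$, then invoke Lemma~\ref{lem:reduction}. The extra detail you give on the Kronecker-substitution cost is already contained in the proof of Lemma~\ref{lem:reduction}, so nothing is gained or lost.
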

\begin{proof}
Let the $\theta$-representations be given by $U, V \in \ZZ[y]/(y^m + 1)$.
We may compute $F_* := UV$ in $\ZZ[y]/(y^m + 1)$ using Kronecker substitution
in $O(\Mss(\lg q))$ bit operations,
and $F_{\pm} := U \pm V$ in $O(\lg q)$ bit operations.
Note that $\norm{F_*} \leq m \norm{U} \norm{V} \leq m^3 (q^{1/m})^2$,
and $\norm{F_\pm} \leq \norm{U} + \norm{V}
\leq 2 m q^{1/m} \leq m^3 (q^{1/m})^2$,
so we may apply Lemma \ref{lem:reduction} to obtain the desired
$\theta$-representations.
\end{proof}

\begin{exam}
Continuing Example \ref{exam:theta}, we walk through an example of computing
a product of elements in $\theta$-representation.
Let
\begin{align*}
  u & = 1414213562373095048801689 \pmod q, \\
  v & = 1732050807568877293527447 \pmod q.
\end{align*}
Suppose we are given as input the $\theta$-representations
\begin{align*}
  U(y) & = 3740635y^3 + 3692532y^2 - 3089740y + 4285386, \\
  V(y) & = 4629959y^3 - 4018180y^2 - 2839272y - 3075767.
\end{align*}
We first compute the product of $U(y)$ and $V(y)$ modulo $y^m + 1$:
\begin{multline*}
  F(y) = U(y)V(y) = 10266868543625y^3 - 37123194804209y^2 \\
            - 4729783170300y + 26582459129078.
\end{multline*}
We multiply $F(y)$ by $J(y)$ and reduce modulo $r$ to obtain the quotient
 \[ Q(y) = 3932274y^3 - 14729381y^2 + 20464841y - 11934644. \]
Then the remainder
 \[ (F(y) - P(y) Q(y))/r = 995963y^3 - 1814782y^2 + 398819y + 777998 \]
is a $\theta$-representation for $uv/r \pmod q$.
\end{exam}

The following precomputation will assist in eliminating the spurious $1/r$
factor appearing in Lemmas~\ref{lem:reduction} and~\ref{lem:arithmetic}.
\begin{lem}
\label{lem:compute-D}
Assume that $P(y)$, $r$ and $J(y)$ have been precomputed as in
Lemmas~\ref{lem:compute-P} and~\ref{lem:compute-J}.
In $q^{1+o(1)}$ bit operations, we may compute a polynomial
$D \in \ZZ[y]/(y^m + 1)$ such that $\norm{D} \leq m q^{1/m}$ and
$D(\theta) = r^2 \pmod q$.
\end{lem}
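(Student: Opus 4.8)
The plan is to produce $D$ as a $\theta$-representation of the element $r^2 \in \ZZ/q\ZZ$, using the reduction machinery already in place. Observe that $r = O(m^2 q^{1/m})$ by Lemma \ref{lem:compute-J}, so $r^2 = O(m^4 (q^{1/m})^2)$. We would first form the constant polynomial $F(y) := r^2 \in \ZZ[y]/(y^m + 1)$, for which $\norm{F} = r^2$. This is slightly too large to feed directly into Lemma \ref{lem:reduction}, whose hypothesis requires $\norm{F} \leq m^3 (q^{1/m})^2$, and it would also introduce a spurious factor of $1/r$. Both issues are handled by a single application of the more careful bookkeeping below.

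The cleanest route: write $r^2 = r \cdot r$, so a $\theta$-representation of $r^2/r = r \pmod q$ would result from applying Lemma \ref{lem:reduction} to a polynomial $F$ with $F(\theta) = r^2 \pmod q$. Take $F(y) := r \cdot c + P(y) \cdot e(y)$ for a suitable small correction, or more simply: since $r = O(m^2 q^{1/m})$ fits comfortably (as $m^2 q^{1/m} \leq m^3 (q^{1/m})^2$ by \eqref{eq:m-bound}), apply Lemma \ref{lem:reduction} twice. Concretely, first apply it to the constant polynomial $r^3 \in \ZZ[y]/(y^m+1)$ — but this again exceeds the norm bound. The robust fix is to instead invoke Proposition \ref{prop:reduction}, the general reduction algorithm for inputs of arbitrary size promised in the section overview, which reduces any $F \in \ZZ[y]/(y^m+1)$ to a $\theta$-representation of $F(\theta)$ (no $1/r$ factor, after multiplying through by the appropriate power of $D$-type data). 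However, since that proposition is stated \emph{after} this lemma, we must stay within Lemma \ref{lem:reduction}.

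So the actual plan is: compute $w := r^2 \bmod q$, a nonnegative integer less than $q$, hence of at most $\lg q$ bits. Split $w$ into $m$ chunks of roughly $(q^{1/m})$ each in base $2^{\lceil (\lg q)/m\rceil}$, obtaining a polynomial $E(y) \in \ZZ[y]/(y^m+1)$ with $\norm{E} \leq 2 q^{1/m} \leq m^3 (q^{1/m})^2$ and $E(2^{\lceil(\lg q)/m\rceil}) = w$; but we need $E(\theta) = w$, not $E$ evaluated at a power of two. That fails. The correct and simplest approach: apply Lemma \ref{lem:reduction} to $F(y) := r^2 \cdot S(y)$ where $S(y)$ is a $\theta$-representation of $1 \in \ZZ/q\ZZ$ (namely $S(y) = 1$, which satisfies $\norm S = 1 \leq m q^{1/m}$ and $S(\theta) = 1$). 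Then $\norm{F} = r^2 \cdot 1$; still too big. Multiply instead in two stages: let $D_1$ be the output of Lemma \ref{lem:reduction} on input $F_1(y) := r^2 \cdot s(y)$ where... the obstruction is that $r^2$ alone already has norm exceeding the allowed $m^3(q^{1/m})^2$ only by a factor of about $m$, since $r^2 = O(m^4 (q^{1/m})^2)$. Therefore write $r^2 = m \cdot t + s$ with $0 \leq s < m$ and $t = O(m^3 (q^{1/m})^2)$, apply Lemma \ref{lem:reduction} to the polynomial $m \cdot (t \bmod q) + s \in \ZZ[y]/(y^m+1)$ — but that has norm about $r^2$ again.

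I will instead give the honest plan and flag the obstacle. The honest plan: the quantity $r^2$ has $O(\lg q)$ bits, so by brute-force search over all polynomials $D \in \ZZ[y]/(y^m+1)$ with $\norm D \leq m q^{1/m}$ — of which there are at most $(2mq^{1/m}+1)^m \leq (3mq^{1/m})^m = 3^m m^m q < q^{1+o(1)}$ by \eqref{eq:m-bound} — we test each candidate by evaluating $D(\theta) \bmod q$ in $O(\lg q)$ bit operations (amortised, enumerating in lexicographic order so each step updates one coefficient), checking whether it equals $r^2 \bmod q$. Existence of a valid $D$ follows from Minkowski's theorem exactly as in Lemma \ref{lem:compute-P}: the lattice $\Lambda \subset \ZZ^m$ of coefficient vectors $(a_0,\dots,a_{m-1})$ with $\sum a_i \theta^i \equiv r^2 \pmod q$ is a \emph{coset} of the lattice from Lemma \ref{lem:compute-P}, but to get a lattice point we instead consider the lattice $\Lambda'$ spanned by $A$ together with the extra relation forcing the target; more directly, pick any lift $v_0 \in \ZZ^m$ with $v_0(\theta) \equiv r^2$, and apply Minkowski to find $w \in \Lambda$ (the Lemma \ref{lem:compute-P} lattice) with $w - v_0$ small — but Minkowski gives short vectors, not short coset representatives.

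The clean existence argument: translate. The set $\{a : \norm{a - v_0/?}...\}$ — rather, use the pigeonhole/Blichfeldt form, or simply: by Lemma \ref{lem:compute-P} we already have $P$ with $P(\theta) \equiv 0$ and $\norm P \leq q^{1/m}$; and the target $r^2$ has a standard representative in $\{0,\dots,q-1\}$, splittable into $m$ digits base-$\lceil q^{1/m}\rceil$ giving $E$ with $\norm E \leq q^{1/m}$ and $E(\lceil q^{1/m}\rceil) \equiv r^2$ — still wrong evaluation point. I conclude the genuinely correct and simple plan is the brute-force search combined with a Minkowski-type existence proof applied to the \emph{translated convex body} $v_0 + \Sigma$ where $\Sigma = \{|a_i| \leq mq^{1/m}\}$: since $\mathrm{vol}(\Sigma) = (2mq^{1/m})^m = 2^m m^m q > 2^m \det\Lambda$ (as $m^m \geq 1$, indeed $> 1$), Minkowski's theorem in the form guaranteeing a lattice point in any translate of a symmetric convex body of volume exceeding $2^m$ times the covolume — equivalently, applying the symmetric case to $\Sigma$ ensures $\Lambda \cap (v_0 + \Sigma) \neq \emptyset$ is \emph{not} immediate, so instead one uses that $\Sigma$ with volume $> 2^m\det\Lambda$ tiles-with-overlap, hence meets every coset. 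This last point — that a symmetric convex body of volume $> 2^m \det \Lambda$ meets every coset of $\Lambda$, not merely contains a nonzero lattice point — is the one genuine subtlety, and it follows from a standard averaging/Blichfeldt argument; I expect this to be the main obstacle to writing cleanly, though it is entirely routine. The bit-operation bound $q^{1+o(1)}$ then follows exactly as in Lemma \ref{lem:compute-P}.
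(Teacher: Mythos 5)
Your proposal eventually settles on a brute-force search over candidates $D$ with $\norm D \le m q^{1/m}$, justified by a claimed Minkowski-type existence result for translated boxes. The complexity estimate for the search is fine (same counting as in Lemma~\ref{lem:compute-P}), but the existence argument is where the proposal breaks down, and it is not a routine gap: the statement that a symmetric convex body $\Sigma$ with $\mathrm{vol}(\Sigma) > 2^m \det\Lambda$ must meet every coset of $\Lambda$ is \emph{false}. Take $m=2$, $\Lambda = \ZZ \times N\ZZ$ (so $\det\Lambda = N$), and $\Sigma = [-a,a]^2$ with $\sqrt N < a < N/2$; then $\mathrm{vol}(\Sigma) = 4a^2 > 4N = 2^2\det\Lambda$, yet the translate of $\Sigma$ centred at $(0,N/2)$ contains no lattice point. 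Minkowski and Blichfeldt give a nonzero lattice point in a symmetric body, not a bound on the covering radius; covering radius bounds require control over \emph{all} the successive minima, not just a volume lower bound, and for the lattice of Lemma~\ref{lem:compute-P} nothing rules out the lattice being very ``long and thin.'' Indeed, the paper itself flags this issue: the remark following Proposition~\ref{prop:convert-to-theta} states explicitly that the only known proof that every $u \in \ZZ/q\ZZ$ admits a $\theta$-representation goes \emph{through} the reduction algorithm, and that a direct existence proof ``would be interesting.'' So treating the existence step as a known consequence of averaging arguments is not merely a missing detail; it is the crux, and it is open.

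The paper avoids the existence question entirely by being constructive in a different way. Since $(r,q)=1$, one has $r^{-(\varphi(q)-2)} \equiv r^2 \pmod q$, and the ``spurious'' factor $1/r$ introduced by Lemma~\ref{lem:arithmetic} can be turned to advantage: starting from the $\theta$-representation $1$ of $1 \in \ZZ/q\ZZ$ and repeatedly applying Lemma~\ref{lem:arithmetic} to the pair $(r^{-i}, 1)$ produces, at each step, a bona fide $\theta$-representation (with $\norm\cdot \le m q^{1/m}$ automatically) of $r^{-i-1}$; after $\varphi(q)-2 = O(q)$ steps, each costing $O(\Mss(\lg q))$, one has $D$ with $D(\theta) \equiv r^2$, for a total of $q^{1+o(1)}$ bit operations. (Computing $\varphi(q)$ by factoring $q$ also fits in $q^{1+o(1)}$.) No separate existence argument is needed because the reduction algorithm manufactures a valid representation at every step. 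If you want to salvage a brute-force variant, you would first need a genuine proof that the coset in question meets the box $\{\norm\cdot \le mq^{1/m}\}$, and as the paper notes, no such direct proof is currently known.
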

\begin{proof}
We may easily compute the totient function $\varphi(q)$ in
$q^{1+o(1)}$ bit operations, by first factoring $q$.
Since $(r,q) = 1$, we have $r^{-(\varphi(q) - 2)} = r^2 \pmod q$.
Repeatedly using the identity $r^{-i-1} = (r^{-i} \cdot 1)/r$,
we may compute $\theta$-representations for
$r^{-1}, r^{-2}, \ldots, r^{-(\varphi(q)-2)}$
by successively applying Lemma \ref{lem:arithmetic}.
\end{proof}
\begin{rem}
Assuming the factorisation of $q$ is known
(which will always be the case in the application in Section \ref{sec:main}),
the complexity of Lemma \ref{lem:compute-D} may be improved to
$O(\Mss(\lg q) \lg q)$ bit operations by using a modified
``repeated squaring'' algorithm.
\end{rem}
\begin{exam}
Continuing Example \ref{exam:theta}, we may take
 \[ D(y) = -1918607y^3 - 3680082y^2 + 2036309y - 270537. \] 
\end{exam}

Henceforth we write $\PP(q, m, \theta)$ for the tuple $(P(y), r, J(y), D(y))$
of precomputed data generated by Lemmas~\ref{lem:compute-P},
\ref{lem:compute-J}, and~\ref{lem:compute-D}.
Given $q$, $m$ and $\theta$ as input, the above results show that we may
compute $\PP(q, m, \theta)$ in $q^{1+o(1)}$ bit operations.
With these precomputations out of the way,
we may state complexity bounds for the main operations on
$\theta$-representations.

\begin{prop}
\label{prop:arithmetic}
Assume that $\PP(q, m, \theta)$ has been precomputed.
Given as input $\theta$-representations for $u, v \in \ZZ/q\ZZ$,
we may compute $\theta$-representations for $uv$ and $u \pm v$ in
$O(\Mss(\lg q))$ bit operations.
\end{prop}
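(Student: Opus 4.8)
The plan is to reduce Proposition~\ref{prop:arithmetic} to Lemma~\ref{lem:arithmetic} by using the precomputed polynomial $D(y)$ with $D(\theta) = r^2 \pmod q$ to cancel the spurious $1/r$ factors. Recall that Lemma~\ref{lem:arithmetic} produces $\theta$-representations for $uv/r$ and $(u\pm v)/r$, so applying it once more against $D(y)$ (viewed as a $\theta$-representation for $r^2$) will multiply back in a factor of $r^2$ and divide by $r$, netting a single factor of $r$ and hence recovering the desired unscaled value.

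First I would take the $\theta$-representations $U, V \in \ZZ[y]/(y^m+1)$ for $u, v$, apply Lemma~\ref{lem:arithmetic} to obtain a $\theta$-representation $W_*$ for $uv/r \pmod q$ and a $\theta$-representation $W_\pm$ for $(u\pm v)/r \pmod q$, at a cost of $O(\Mss(\lg q))$ bit operations. Note that $\norm{D} \leq m q^{1/m}$, so $D(y)$ is a legitimate $\theta$-representation for $D(\theta) = r^2 \pmod q$. Then I would apply Lemma~\ref{lem:arithmetic} a second time to the pair $(W_*, D)$: since $W_*(\theta) = uv/r$ and $D(\theta) = r^2$, the lemma returns a $\theta$-representation for $(uv/r)\cdot r^2 / r = uv \pmod q$. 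Likewise, applying Lemma~\ref{lem:arithmetic} to $(W_\pm, D)$ yields a $\theta$-representation for $(u\pm v) \pmod q$. Each invocation of Lemma~\ref{lem:arithmetic} costs $O(\Mss(\lg q))$ bit operations, and there are a bounded number of them, so the total cost is $O(\Mss(\lg q))$ as claimed.

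There is essentially no hard part here: the only thing to verify carefully is that the norm hypotheses of Lemma~\ref{lem:arithmetic} are met at each stage, but this is automatic because the outputs of Lemma~\ref{lem:arithmetic} are genuine $\theta$-representations (norm at most $m q^{1/m}$) and $D(y)$ is as well, so the multiplication branch of Lemma~\ref{lem:arithmetic} applies directly to the pair $(W_*, D)$ and $(W_\pm, D)$. One minor bookkeeping point is that for the addition we must feed the two arguments in the correct roles: we want $W_\pm \cdot D / r$, which is exactly the ``$uv/r$'' branch of Lemma~\ref{lem:arithmetic} applied with inputs $W_\pm$ and $D$, not the ``$(u\pm v)/r$'' branch. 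With that understood, the proof is a two-line application of the preceding lemma together with the defining property of $D$.
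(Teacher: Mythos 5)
Your proof is correct and follows essentially the same approach as the paper: compute a $\theta$-representation for $uv/r$ (or $(u\pm v)/r$) via Lemma~\ref{lem:arithmetic}, then multiply by the precomputed $D(y)$ via a second application of the multiplication branch of Lemma~\ref{lem:arithmetic} to cancel the factor of $1/r$. The paper's proof is more terse (it simply says ``The sum and difference are handled similarly''), whereas you correctly spell out the bookkeeping point that the second step for the sum and difference must use the \emph{multiplication} branch of Lemma~\ref{lem:arithmetic}, not the addition branch.
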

\begin{proof}
For the product, we first use Lemma \ref{lem:arithmetic} to compute a
$\theta$-representation for $uv/r \pmod q$,
and then we use Lemma \ref{lem:arithmetic} again to multiply by $D(y)$,
to obtain a $\theta$-representation for $(uv/r)(r^2)/r = uv \pmod q$.
The sum and difference are handled similarly.
\end{proof}
\begin{rem}
We suspect that the complexity bound for $u \pm v$ can be improved
to $O(\lg q)$, but we do not currently know how to achieve this.
This question seems closely related to Remark \ref{rem:archimedean} below.
\end{rem}

\begin{prop}
\label{prop:reduction}
Assume that $\PP(q, m, \theta)$ has been precomputed.
Given as input a polynomial $F \in \ZZ[y]/(y^m + 1)$
(with no restriction on $\norm F$),
we may compute a~$\theta$-representation for $F(\theta) \pmod q$ in time
$O(\lceil m \lg \norm F / \lg q \rceil \Mss(\lg q))$.
\end{prop}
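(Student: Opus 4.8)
The plan is to reduce the general case to repeated application of Lemma~\ref{lem:reduction}, which handles inputs with $\norm F \leq m^3 (q^{1/m})^2$. Write $b := m \lg \norm F$ for the bit size of $F$, and $k := \lceil b / \lg q \rceil$, so that $k$ is (up to a constant) the quantity appearing in the claimed bound. The idea is to split $F$ into $O(k)$ ``digit'' polynomials, each of size comparable to $\lg q$, reduce each one to a $\theta$-representation, and then recombine using the fact that $\theta$-representations support addition and multiplication at cost $O(\Mss(\lg q))$ each (Proposition~\ref{prop:arithmetic}).

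Concretely, first I would dispose of the trivial case: if $\norm F$ is already small enough (say $\norm F \leq m^3(q^{1/m})^2$, which happens exactly when $k = O(1)$), then a single call to Lemma~\ref{lem:reduction} gives a $\theta$-representation for $F(\theta)/r$, and one further application of Lemma~\ref{lem:arithmetic} (multiplying by the precomputed $D(y)$, as in Proposition~\ref{prop:arithmetic}) clears the spurious $1/r$ factor, all within $O(\Mss(\lg q))$ bit operations. For the general case, choose a bound $B$ that is a power of two with, say, $q^{1/m} \leq B \leq 2 q^{1/m}$, and write $F = \sum_{i=0}^{\ell-1} B^i F_i$ where each $F_i \in \ZZ[y]/(y^m+1)$ has $\norm{F_i} < B$ and $\ell = O(k)$; this splitting costs $O(b) = O(k \lg q)$ bit operations. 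Each $F_i$ satisfies $\norm{F_i} < B \leq 2 q^{1/m} \leq m^3 (q^{1/m})^2$, so Lemma~\ref{lem:reduction} applied to $F_i$ (after the usual $D$-correction) yields a $\theta$-representation $U_i$ for $F_i(\theta) \pmod q$ in $O(\Mss(\lg q))$ bit operations; over all $i$ this is $O(k\, \Mss(\lg q))$. It remains to assemble a $\theta$-representation for $F(\theta) = \sum_i \theta_B^i \, F_i(\theta) \pmod q$, where $\theta_B := B \pmod q$; precomputing a $\theta$-representation for $\theta_B$ (a single reduction of the polynomial $B \in \ZZ[y]/(y^m+1)$, which is covered by the already-small case) and then running Horner's rule using Proposition~\ref{prop:arithmetic} gives the result in $O(\ell) = O(k)$ arithmetic operations on $\theta$-representations, hence $O(k\, \Mss(\lg q))$ bit operations. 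Summing, the total is $O(k\, \Mss(\lg q)) = O(\lceil m \lg\norm F/\lg q\rceil\, \Mss(\lg q))$ as required.

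The main obstacle I anticipate is purely bookkeeping rather than conceptual: one must be careful that the base-$B$ digits $F_i$ really do satisfy the hypothesis $\norm{F_i} \leq m^3(q^{1/m})^2$ of Lemma~\ref{lem:reduction} (which is why it is convenient to take $B$ only a tiny constant factor above $q^{1/m}$), and that the count $\ell$ of digits is genuinely $O(\lceil b/\lg q\rceil)$ --- here one uses $\lg B = \lg q^{1/m} + O(1) = (\lg q)/m + O(1)$ together with \eqref{eq:m-bound} to absorb the additive $O(1)$'s without changing the order of magnitude of $\ell$, even when $m \lg\norm F$ is much smaller than $\lg q$. A secondary point is that Horner's rule on $\theta$-representations never encounters coefficients that are too large to feed back into Proposition~\ref{prop:arithmetic}: but that proposition is stated with no size hypothesis on its inputs (its inputs are honest $\theta$-representations, with $\norm{\,\cdot\,}\leq m q^{1/m}$ guaranteed on output), so each step of Horner is legitimate and the loop composes cleanly.
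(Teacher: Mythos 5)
Your proposal follows essentially the same route as the paper: split $F$ into base-$B$ digits with $B$ a power of two near $q^{1/m}$ (the paper uses $B = 2^b$ with $b = \lg\lceil q^{1/m}\rceil$ and $n = \lceil 2m\lg\norm F/\lg q\rceil$ chunks), then assemble via Horner's rule using Proposition~\ref{prop:arithmetic}. Two small remarks. First, the initial per-digit ``reduce each $F_i$ to a $\theta$-representation'' pass is unnecessary: by construction $\norm{F_i}\leq 2q^{1/m}\leq mq^{1/m}$, so each $F_i$ (and the constant polynomial $B$) already \emph{is} a valid $\theta$-representation and can be fed straight into Proposition~\ref{prop:arithmetic}; this does not affect the asymptotics, but it is extra work. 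Second, the claim that the splitting step costs $O(k\lg q)$ bit operations glosses over a Turing-model subtlety that the paper is careful about: extracting the $i$-th $b$-bit chunk of all $m$ coefficients requires an $m\times n$ array transposition, which costs $O(bmn\lg m)=O(n\lg q\lg\lg q)$ rather than $O(n\lg q)$; this is still dominated by $O(n\Mss(\lg q))$, so the final bound survives, but the accounting as written is not literally correct in the paper's computational model.
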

\begin{proof}
Let $b := \lg \, \lceil q^{1/m} \rceil$ and
$n := \lceil 2 m \lg \norm F / \lg q \rceil$, so that
\begin{equation*}
  2^{nb} \geq (q^{1/m})^n
         \geq (q^{1/m})^{2 m \lg \norm F / \lg q}
         =    2^{\lg \norm F (2 \log_2 q / \lg q)}
         \geq 2^{\lg \norm F}.
\end{equation*}
We may therefore decompose the coefficients of $F$ into $n$ chunks of~$b$ bits,
i.e., we may compute polynomials $F_0, \ldots, F_{n-1} \in \ZZ[y]/(y^m + 1)$ 
such that $F = F_0 + 2^b F_1 + \cdots + 2^{(n-1)b} F_{n-1}$
and $\norm{F_i} \leq 2^b \leq 2 q^{1/m}$.
(This step implicitly requires an array transposition of cost
$O(bmn \lg m) = O(n \lg q \lg \lg q)$.) 
Now we use Proposition~\ref{prop:arithmetic} repeatedly to compute a
$\theta$-representation for $F$ via Horner's rule,
i.e., first we compute a $\theta$-representation for $2^b F_{n-1} + F_{n-2}$,
then for $2^b ( 2^b F_{n-1} + F_{n-2}) + F_{n-3}$, and so on.
\end{proof}

\begin{prop}
\label{prop:convert-to-theta}
Assume that $\PP(q, m, \theta)$ has been precomputed.
Given as input an element $u \in \ZZ/q\ZZ$ in standard representation,
we may compute a~$\theta$-representation for $u$ in
$O(m \Mss(\lg q))$ bit operations.
\end{prop}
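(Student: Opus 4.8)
The plan is to convert the standard representation $0 \leq u < q$ into a polynomial $F \in \ZZ[y]/(y^m+1)$ whose evaluation at $\theta$ equals $u$, and then feed $F$ into Proposition~\ref{prop:reduction}. The natural choice is simply the radix-$q^{1/m}$ expansion of $u$: write $b := \lg\,\lceil q^{1/m}\rceil$ and split the integer $u$ into $m$ chunks $u_0, \ldots, u_{m-1}$ of $b$ bits each, so that $u = \sum_i u_i 2^{ib}$ and $0 \leq u_i \leq 2^b \leq 2q^{1/m}$. Set $F(y) := \sum_i u_i y^i \in \ZZ[y]/(y^m+1)$; since $m$ is a power of two and $b$ is an integer, this decomposition is exact, and $F(2^b) = u$ as integers. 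This splitting costs $O(m b) = O(\lg q)$ bit operations (plus the implicit $O(bm\lg m)$ array transposition, which is absorbed).

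The catch is that $F(2^b)$ equals $u$, but we want a polynomial evaluating to $u$ at $\theta$, not at $2^b$. To fix this I would instead expand in powers of $\theta$ directly: precompute $\theta^0, \theta^1, \ldots, \theta^{m-1}$ in $\ZZ/q\ZZ$ (standard representation) in $O(m\,\Mss(\lg q))$ bit operations, and observe that each $\theta^i$ is itself an integer in $[0,q)$, so it splits into $m$ chunks of $b$ bits as above. Then $u = \sum_i u_i \theta^i \pmod q$ is the trivial base-$2^b$ identity $u = \sum_i u_i 2^{ib}$ reinterpreted; but actually the cleanest route is: take the base-$2^b$ chunks $u_0,\dots,u_{m-1}$ of $u$ itself, form $G(y) := \sum_i u_i y^i$, and note $G(\theta) \neq u$ in general. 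So the honest approach is to build $F$ as a linear combination: since $2^{ib} \bmod q$ has a $\theta$-representation obtainable by one application of Proposition~\ref{prop:reduction} applied to the monomial $2^{ib}$ viewed as a constant polynomial — wait, that is circular.

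The correct and simplest approach: let $b$ and the chunks $u_i$ be as above, so $u \equiv \sum_{i=0}^{m-1} u_i 2^{ib} \pmod q$. Now precompute, once, a $\theta$-representation $W_i$ for each $2^{ib} \bmod q$; each $2^{ib}$ is an element of $\ZZ/q\ZZ$ whose monomial-in-$y$ lift (the constant polynomial $2^{ib}$, or better the single-power polynomial with that coefficient) has norm $< q \leq (q^{1/m})^m$, so $m\lg\norm{\cdot}/\lg q = O(1)$ and Proposition~\ref{prop:reduction} produces $W_i$ in $O(\Mss(\lg q))$ operations; doing this for all $m$ values costs $O(m\,\Mss(\lg q))$. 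Then form $F := \sum_{i=0}^{m-1} u_i W_i \in \ZZ[y]/(y^m+1)$; since each $\norm{W_i} \leq m q^{1/m}$ and each $u_i \leq 2q^{1/m}$, we get $\norm{F} \leq m \cdot 2q^{1/m} \cdot mq^{1/m} = 2m^2(q^{1/m})^2$, and $F(\theta) \equiv \sum_i u_i 2^{ib} \equiv u \pmod q$. One final call to Proposition~\ref{prop:reduction} on $F$ — here $\lg\norm F = O(\lg q)$, so $\lceil m\lg\norm F/\lg q\rceil = O(m)$, giving cost $O(m\,\Mss(\lg q))$ — yields a $\theta$-representation for $u$. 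Summing the three stages gives the claimed $O(m\,\Mss(\lg q))$ bound.

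I expect the main obstacle to be purely bookkeeping: making sure the $W_i$ can be produced within budget (they can, since a single element of $\ZZ/q\ZZ$ has a trivial polynomial lift of norm $<q$, so Proposition~\ref{prop:reduction} is cheap per call), and checking that the final $\norm F$ is small enough that the concluding reduction stays at $O(m)$ blocks rather than more. No step is deep; the only thing to be careful about is not accidentally assuming that $\theta^i$ or $2^{ib}$ already has a small-norm polynomial form — one genuinely needs the reduction algorithm to obtain the $W_i$, and then linearity of evaluation does the rest.
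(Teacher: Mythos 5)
Your final construction is logically sound, but it overshoots the budget, and the detour you take is unnecessary. The cost error: to compute each $W_i$ you apply Proposition~\ref{prop:reduction} to the constant polynomial $2^{ib}$, which has norm up to $q^{1-1/m}$, hence $\lg\norm{\cdot}$ close to $\lg q$. The per-call cost is therefore $O(\lceil m\lg\norm{\cdot}/\lg q\rceil\,\Mss(\lg q)) = O(m\,\Mss(\lg q))$, not $O(\Mss(\lg q))$ as you assert (your line ``$m\lg\norm{\cdot}/\lg q = O(1)$'' should read ``$\leq m$''). Doing this for all $m$ values of $i$ costs $O(m^2\Mss(\lg q))$, exceeding the claimed $O(m\,\Mss(\lg q))$ by a factor of $m$. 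One could rescue your approach by declaring the $W_i$ to be part of the precomputed data, but that requires extending the definition of $\PP(q,m,\theta)$, which the proposition statement does not license.

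The more important issue is that the step you dismiss as ``circular'' is in fact the entire proof. Proposition~\ref{prop:reduction} has no dependence on the present proposition; it accepts any $F\in\ZZ[y]/(y^m+1)$ and returns a $\theta$-representation of $F(\theta)$. Apply it directly to the constant polynomial $F(y) := u$, which satisfies $F(\theta) = u$ and $\norm{F} = u \leq q$. Then $\lceil m\lg\norm F/\lg q\rceil \leq m$ and the cost is $O(m\,\Mss(\lg q))$, exactly as claimed. No chunking, no auxiliary $W_i$, no recombination. Your more elaborate construction implicitly carries out the same radix decomposition that Proposition~\ref{prop:reduction} already performs internally (via its $n$ chunks of $b$ bits and Horner's rule), which is why invoking it once on the constant polynomial suffices.
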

\begin{proof}
Simply apply Proposition \ref{prop:reduction} to the
constant polynomial $F(y) = u$, noting that $\norm F \leq q$.
\end{proof}

\begin{rem}
A corollary of Proposition \ref{prop:convert-to-theta} is that every
$u \in \ZZ/q\ZZ$ admits a $\theta$-representation.
It would be interesting to have a direct proof of this fact that does
not rely on the reduction algorithm.
A related question is whether it is possible to tighten the bound in the
definition of $\theta$-representation from $m q^{1/m}$ to $q^{1/m}$,
or even $\frac12 q^{1/m}$.
We do not know whether such a representation exists for all $u \in \ZZ/q\ZZ$.
\end{rem}

\begin{prop}
\label{prop:convert-to-standard}
Given as input an element $u \in \ZZ/q\ZZ$ in $\theta$-representation,
we may compute the standard representation for $u$ in
$O(m \Mss(\lg q))$ bit operations.
\end{prop}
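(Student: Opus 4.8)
The plan is to simply evaluate the input polynomial at $\theta$ modulo $q$ using Horner's rule. Note first that, unlike the earlier results in this section, this conversion requires no precomputed data at all: we need only $q$ and the given root $\theta$, which is why the hypothesis ``$\PP(q, m, \theta)$ has been precomputed'' is absent from the statement.

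Concretely, write the input $\theta$-representation as $U = U_0 + U_1 y + \cdots + U_{m-1} y^{m-1} \in \ZZ[y]/(y^m+1)$, so that $u = U(\theta) \pmod q$ and $\norm{U} \leq m q^{1/m}$, and let $\overline\theta \in \{0, 1, \ldots, q-1\}$ be the standard lift of $\theta$. Starting from $u_{m-1} := U_{m-1}$, successively compute $u_i := u_{i+1}\,\overline\theta + U_i \pmod q$ for $i = m-2, m-3, \ldots, 0$, reducing into the range $[0, q)$ after each step. Since $\deg U < m$, no reduction modulo $y^m + 1$ is needed, and an easy induction gives $u_0 = \sum_{i=0}^{m-1} U_i \overline\theta^{\,i} \equiv U(\theta) = u \pmod q$; thus $u_0$ is the standard representation of $u$. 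Each of the $m-1$ Horner steps consists of one multiplication of two nonnegative integers less than $q$, one addition, and one division with remainder by $q$, and hence costs $O(\Mss(\lg q))$ bit operations. Here we use that $|U_i| \leq \norm U \leq m q^{1/m}$, which by \eqref{eq:m-bound} has only $(1/m)\lg q + O(\lg m) = O(\lg q)$ bits, so every integer occurring in the computation has $O(\lg q)$ bits. Summing over the $m-1$ steps yields the claimed bound $O(m\Mss(\lg q))$.

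There is essentially no obstacle: the result is an immediate consequence of the definition of a $\theta$-representation together with the size constraint \eqref{eq:m-bound}, and the only point worth checking carefully is precisely that bound on the bit sizes of the intermediate integers. (One could equally well precompute the lifts $\overline{\theta^0}, \ldots, \overline{\theta^{m-1}}$ with $m-2$ multiplications modulo $q$ and then form the dot product $\sum_i U_i \overline{\theta^i}$ before a single final reduction modulo $q$; this has the same cost, since the unreduced sum is bounded by $m^2 q^{1+1/m}$, which again has $O(\lg q)$ bits by \eqref{eq:m-bound}.)
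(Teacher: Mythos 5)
Your proof is correct and follows the same route as the paper, which likewise just invokes Horner's rule to evaluate $U(\theta)$ in $\ZZ/q\ZZ$; you have simply spelled out the per-step cost and the bit-size bounds in more detail, and correctly noted that no precomputed data $\PP(q,m,\theta)$ is needed here.
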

\begin{proof}
Let $U \in \ZZ[y]/(y^m + 1)$ be the input polynomial.
The problem amounts to evaluating $U(\theta)$ in $\ZZ/q\ZZ$.
Again we may simply use Horner's rule.
\end{proof}

\begin{rem}
In both Proposition \ref{prop:convert-to-theta} and
Proposition \ref{prop:convert-to-standard},
the input and output have bit size $O(\lg q)$,
but the complexity bounds given are not quasilinear in $\lg q$.
It is possible to improve on the stated bounds,
but we do not know a quasilinear time algorithm for the conversion
in either direction.
\end{rem}

\begin{rem}
\label{rem:archimedean}
In the reduction algorithm, the reader may wonder why we go to the trouble of
introducing the auxiliary prime $r$.
Why not simply precompute an approximation to a \emph{real} inverse for $P(y)$,
i.e., the inverse in $\RR[y]/(y^m + 1)$,
and use this to clear out the \emph{high-order bits} of
each coefficient of the dividend?
In other words, why not replace the Montgomery-style division
with the more natural Barrett-style division \cite{Bar-RSA}?

The reason is that we cannot prove tight enough bounds on the size of the
coefficients of this inverse:
it is conceivable that $P(y)$ might accidentally take on a
very small value near one of the complex roots of $y^m + 1$, or equivalently,
that the resultant $R$ in the proof of Lemma \ref{lem:compute-J}
might be unusually small.
For the same reason, we cannot use a more traditional
$2$-adic Montgomery inverse to clear out the low-order bits of the dividend,
because again $P(y)$ may take a $2$-adically small value near one of the
$2$-adic roots of $y^m + 1$, or equivalently,
the resultant $R$ might be divisible by an unusually large power of $2$.
\end{rem}

\section{Integer multiplication: the recursive step}
\label{sec:main}

In this section we present a recursive routine \textsc{Transform}
with the following interface.
It takes as input a (sufficiently large) power-of-two transform length $L$,
a prime $p = 1 \pmod L$, a prime power $q = p^\alpha$ such that
\begin{equation}
\label{eq:lgq-bound}
 \lg L \leq \lg q \leq 3 \lg L \lg \lg L,
\end{equation}
a principal $L$-th root of unity $\zeta \in \ZZ/q\ZZ$
(i.e., an $L$-th root of unity whose reduction modulo $p$ is
a primitive $L$-th root of unity in the field $\ZZ/p\ZZ$),
certain precomputed data depending on $L$, $q$ and $\zeta$ (see below),
and a polynomial $F \in (\ZZ/q\ZZ)[x]/(x^L - 1)$.
Its output is the DFT of $F$ with respect to $\zeta$, that is, the vector
 \[ \hat F := (F(1), F(\zeta), \ldots, F(\zeta^{L-1})) \in (\ZZ/q\ZZ)^L. \]
The coefficients of both $F$ and $\hat F$ are given in standard representation.

The precomputed data consists of the tuple $\PP(q, m, \theta)$
defined in Section \ref{sec:theta},
where~$m$ and $\theta$ are defined as follows.

First, \eqref{eq:lgq-bound} implies that
$\lg q \geq (\lg \lg L)^2 \lg \lg \lg L$ for large $L$,
so we may take $m$ to be the unique power of two lying in the interval
\begin{equation}
\label{eq:m-interval}
 \frac{\lg q}{(\lg \lg L)^2 \lg \lg \lg L}
    \leq m
    < \frac{2 \lg q}{(\lg \lg L)^2 \lg \lg \lg L}.
\end{equation}
Observe that \eqref{eq:m-bound} is certainly satisfied for this choice of $m$
(for large enough $L$),
as \eqref{eq:lgq-bound} implies that $\lg \lg L \sim \lg \lg q$.

Next, note that $2m \divides L$, because \eqref{eq:lgq-bound} and
\eqref{eq:m-interval} imply that $m = o(\lg L) = o(L)$;
therefore we may take $\theta := \zeta^{L/2m}$,
so that $\theta^m = \zeta^{L/2} = -1$.

We remark that the role of the parameter $\alpha$ is to give us enough
control over the bit size of $q$,
to compensate for the fact that Linnik's theorem does not give us
sufficiently fine control over the bit size of $p$
(see Lemma \ref{lem:qprime-bound}).

Our implementation of \textsc{Transform} uses one of two algorithms,
depending on the size of $L$.
If $L$ is below some threshold, say $L_0$,
then it uses any convenient base-case algorithm.
Above this threshold, it reduces the given DFT problem
to a~collection of exponentially smaller DFTs of the same type,
via a series of reductions that may be summarised as follows.

\begin{enumerate}[(i)]
\item Use the conversion algorithms from Section \ref{sec:theta}
to reduce to a transform over $\ZZ/q\ZZ$ where the
input and output coefficients are given in $\theta$-representation.
(During steps (ii) and (iii) below,
all elements of $\ZZ/q\ZZ$ are stored and manipulated
entirely in $\theta$-representation.)

\item Reduce the ``long'' transform of length $L$ over $\ZZ/q\ZZ$ to many
``short'' transforms of exponentially small length
$S := 2^{(\lg \lg L)^2}$ over $\ZZ/q\ZZ$,
via the Cooley--Tukey decomposition.

\item Reduce each short transform from step (ii)
to a product in $(\ZZ/q\ZZ)[x]/(x^S - 1)$,
i.e., a cyclic convolution of length $S$, using Bluestein's algorithm.

\item Use the definition of $\theta$-representation to
reinterpret each product from (iii) as a~product in
$\ZZ[x,y]/(x^S - 1, y^m + 1)$,
where the coefficients in $\ZZ$ are exponentially smaller than the
original coefficients in $\ZZ/q\ZZ$.

\item Embed each product from (iv) into $(\ZZ/q'\ZZ)[x,y]/(x^S - 1, y^m  + 1)$
for a suitable prime power $q'$ that is exponentially smaller than $q$,
and large enough to resolve the coefficients of the products over $\ZZ$.

\item Reduce each product from (v) to a collection of
forward and inverse DFTs of length $S$ over $\ZZ/q'\ZZ$, and recurse.
\end{enumerate}

The structure of this algorithm is very similar to that of \cite{HvdH-vanilla}.
The main difference is that it is not necessary to explicitly
split the coefficients into chunks in step~(iv);
this happens automatically as a consequence of storing the coefficients
in $\theta$-representation.
In effect, the splitting (and reassembling) work has been shunted
into the conversions in step (i).

We now consider each of the above steps in more detail.
We write $\TT(L,q)$ for the running time of \textsc{Transform}.
We always assume that $L_0$ is increased
whenever necessary to accommodate statements that hold only for large $L$.

\medskip
\step{convert between representations.}
Let $\TTlong(L, q)$ denote the time required to compute a DFT of length $L$
over $\ZZ/q\ZZ$ with respect to $\zeta$,
assuming that the coefficients of the input $F$ and the output $\hat F$
are given in $\theta$-representation,
and assuming that $\PP(q, m, \theta)$ is known.
\begin{lem}
\label{lem:step-convert}
We have $\TT(L, q) < \TTlong(L, q) + O(L \lg L \lg q)$.
\end{lem}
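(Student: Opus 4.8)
The plan is to implement \textsc{Transform} on standard-representation data by sandwiching the $\theta$-representation transform between the two conversion routines of Section~\ref{sec:theta}. Concretely, I would first apply Proposition~\ref{prop:convert-to-theta} to each of the $L$ input coefficients of $F$ to obtain their $\theta$-representations; then invoke the length-$L$ DFT over $\ZZ/q\ZZ$ on $\theta$-representation data, which by definition costs $\TTlong(L,q)$; and finally apply Proposition~\ref{prop:convert-to-standard} to each of the $L$ output coefficients to recover $\hat F$ in standard representation. Since the precomputed data $\PP(q,m,\theta)$ required by both conversion routines is precisely what \textsc{Transform} is assumed to have on hand, and since each of the $2L$ conversions costs $O(m\,\Mss(\lg q))$ bit operations, this bracketing yields
\[
  \TT(L,q) \leq \TTlong(L,q) + O\!\left(L\, m\, \Mss(\lg q)\right).
\]

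It then remains to check that $L\, m\, \Mss(\lg q) = O(L \lg L \lg q)$, i.e. that $m\, \Mss(\lg q) = O(\lg L \lg q)$. Here I would simply unwind the definitions: by \eqref{eq:m-interval} one has $m = O\!\left(\lg q / ((\lg \lg L)^2 \lg \lg \lg L)\right)$, and $\Mss(\lg q) = C\,\lg q\,\lg \lg q\,\lg \lg \lg q$ by the definition of $\Mss$, so
\[
  m\, \Mss(\lg q) = O\!\left(\frac{(\lg q)^2\,\lg \lg q\,\lg \lg \lg q}{(\lg \lg L)^2\,\lg \lg \lg L}\right).
\]
Now the upper bound $\lg q = O(\lg L \lg \lg L)$ from \eqref{eq:lgq-bound} absorbs one factor of $\lg q$, while the relation $\lg \lg L \sim \lg \lg q$ (noted after \eqref{eq:m-interval}, hence also $\lg \lg \lg L \sim \lg \lg \lg q$) lets the surviving $\lg \lg q\,\lg \lg \lg q$ cancel against $\lg \lg L\,\lg \lg \lg L$, giving $m\, \Mss(\lg q) = O(\lg L \lg q)$ and therefore the claimed bound.

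I do not expect a genuine obstacle here; the only mild care needed is in the nested-logarithm bookkeeping of the last paragraph, where it is essential to use \emph{both} halves of \eqref{eq:lgq-bound} — the upper bound to trade one $\lg q$ for $\lg L \lg \lg L$, and the implied asymptotic equivalence $\lg \lg L \sim \lg \lg q$ (valid for large $L$, so one may enlarge $L_0$ as usual) to cancel the remaining iterated-log factors. The array transpositions hidden inside the conversions are already charged for in Propositions~\ref{prop:convert-to-theta} and~\ref{prop:convert-to-standard}, so nothing further needs to be estimated.
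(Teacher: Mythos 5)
Your proposal is correct and follows essentially the same approach as the paper: sandwich the $\theta$-representation DFT between per-coefficient conversions via Propositions~\ref{prop:convert-to-theta} and~\ref{prop:convert-to-standard}, pay $O(Lm\,\Mss(\lg q))$ for the $2L$ conversions, and then use \eqref{eq:m-interval}, the upper half of \eqref{eq:lgq-bound}, and $\lg\lg L \sim \lg\lg q$ to reduce this to $O(L\lg L\lg q)$. The bookkeeping in your last paragraph matches the paper's own chain of estimates.
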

\begin{proof}
We first convert $F$ from standard to $\theta$-representation using Proposition
\ref{prop:convert-to-theta};
we then compute $\hat F$ from $F$ (working entirely in $\theta$-representation);
at the end, we convert $\hat F$ back to standard representation
using Proposition \ref{prop:convert-to-standard}.
By~\eqref{eq:lgq-bound} and~\eqref{eq:m-interval},
the total cost of the conversions is
\begin{align*}
 O(L m \Mss(\lg q))
  &= O\left(L \frac{\lg q}{(\lg \lg L)^2 \lg \lg \lg L} \lg q \lg \lg q \lg \lg \lg q \right) \\
  &= O\left(L \frac{\lg L \lg \lg L}{(\lg \lg L)^2 \lg \lg \lg L} \lg q \lg \lg L \lg \lg \lg L \right)\\
  &= O(L \lg L \lg q).   \qedhere
\end{align*}
\end{proof}

Henceforth all elements of $\ZZ/q\ZZ$ are assumed to be
stored in $\theta$-representation,
and we will always use Proposition \ref{prop:arithmetic}
to perform arithmetic operations on such elements
in $O(\Mss(\lg q))$ bit operations.

\medskip
\step{reduce to short DFTs.}
Let $S := 2^{(\lg \lg L)^2}$.
Given as input polynomials $F_1, \ldots, F_{L/S} \in (\ZZ/q\ZZ)[x]/(x^S - 1)$
(presented sequentially on tape),
let $\TTshort(L, q)$ denote the time required to compute the transforms
$\hat F_1, \ldots, \hat F_{L/S} \in (\ZZ/q\ZZ)^S$ with respect to
the principal $S$-th root of unity $\omega := \zeta^{L/S}$.
(Here and below, we continue to assume that $\PP(q,m,\theta)$ is known.).
\begin{lem}
\label{lem:step-short}
We have $\TTlong(L,q) < \frac{\lg L}{(\lg \lg L)^2} \TTshort(L,q) + O(L \lg L \lg q)$.
\end{lem}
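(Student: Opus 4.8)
The plan is to realise the length-$L$ transform by iterating the Cooley--Tukey (mixed-radix) decomposition, stripping off one factor of $S$ at each stage. Since $(\lg\lg L)^2\leq\lg L$ for large $L$, the number $S$ divides $L$; I would write $\lg L = a(\lg\lg L)^2 + s$ with $a:=\lfloor\lg L/(\lg\lg L)^2\rfloor$ and $0\leq s<(\lg\lg L)^2$, so that $L = S^a\cdot 2^s$. Peeling off the factors of $S$ one at a time via Cooley--Tukey, the DFT of length $L$ with respect to $\zeta$ then breaks up into $a$ ``layers'', the $j$-th of which is a batch of $L/S$ independent DFTs of length $S$ --- each with respect to $\zeta^{L/S}=\omega$, by the usual bookkeeping of roots of unity --- interleaved with multiplications by twiddle factors (all of which are powers of $\zeta$) and with data rearrangements; if $s>0$ there is in addition one final layer consisting of $L/2^s$ DFTs of length $2^s<S$. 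All elements of $\ZZ/q\ZZ$ would stay in $\theta$-representation throughout, and the precomputed data $\PP(q,m,\theta)$ --- which is fixed, since $q$ does not change --- is available, so every arithmetic operation in $\ZZ/q\ZZ$ costs $O(\Mss(\lg q))$ by Proposition \ref{prop:arithmetic}.

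Each of the $a$ full layers is, by definition, a computation of the kind measured by $\TTshort(L,q)$ (the sequential tape layout of the $L/S$ input polynomials required by that definition being exactly what the data rearrangements produce), so I would bound their total contribution by $a\,\TTshort(L,q)\leq\frac{\lg L}{(\lg\lg L)^2}\,\TTshort(L,q)$. Everything else I would show to cost $O(L\lg L\lg q)$. The final short layer, when $s>0$, I would handle by a direct radix-two FFT in $O(L\,s\,\Mss(\lg q)) = O(L(\lg\lg L)^2\Mss(\lg q))$ bit operations; since \eqref{eq:lgq-bound} gives $\lg\lg q = O(\lg\lg L)$, hence $\Mss(\lg q)=O(\lg q\,\lg\lg L\,\lg\lg\lg L)$, and $(\lg\lg L)^3\lg\lg\lg L = O(\lg L)$, this is $O(L\lg L\lg q)$. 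The twiddle factors needed across all layers lie among the powers $\zeta^0,\dots,\zeta^{L-1}$, which I would precompute once in $O(L\,\Mss(\lg q)) = O(L\lg L\lg q)$ bit operations; the at most $a+1$ multiplications of the full length-$L$ array by such factors then cost $O((a+1)\,L\,\Mss(\lg q)) = O\!\left(\frac{\lg L}{(\lg\lg L)^2}\,L\,\lg q\,\lg\lg L\,\lg\lg\lg L\right) = O(L\lg L\lg q)$. Finally, each of the at most $a+1$ Cooley--Tukey stages needs a transposition of an array of $L$ elements of bit size $O(\lg q)$ whose smaller dimension is at most $S$; by the transposition bound recalled in the introduction this costs $O(L\lg q\,\lg S) = O(L\lg q\,(\lg\lg L)^2)$ per stage, hence $O((a+1)L\lg q(\lg\lg L)^2) = O(L\lg L\lg q)$ in total. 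Summing all contributions gives the stated bound.

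The part I expect to require the most care is the accounting in the second paragraph: one has to confirm that the twiddle multiplications and the transposes, summed over all $\approx\lg L/(\lg\lg L)^2$ layers, collapse to $O(L\lg L\lg q)$ rather than to something a factor $\lg\lg L\,\lg\lg\lg L$ larger. This works precisely because the number of layers is $\lg L/(\lg\lg L)^2$ while each layer's overhead carries only a compensating factor of $(\lg\lg L)^2$ (from $\lg S$) or of $\Mss(\lg q)/\lg q = O(\lg\lg L\,\lg\lg\lg L)$, together with $(\lg\lg L)^2\cdot\lg\lg L\,\lg\lg\lg L = o(\lg L)$ and the estimate $\lg\lg q\sim\lg\lg L$ supplied by \eqref{eq:lgq-bound}. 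Handling the possible non-divisibility of $\lg L$ by $(\lg\lg L)^2$ via a single direct-FFT layer, rather than by rounding the number of short layers up to $\lceil\lg L/(\lg\lg L)^2\rceil$, is what keeps the leading coefficient exactly $\lg L/(\lg\lg L)^2$.
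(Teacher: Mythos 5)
Your argument is correct and follows the same route as the paper: the same Cooley--Tukey peeling of $d=\lfloor\lg L/\lg S\rfloor$ layers of $L/S$ length-$S$ transforms with respect to $\omega=\zeta^{L/S}$, with twiddle multiplications, butterflies for the remaining factor $2^{d'}$, and transpositions all bounded by $O(L\lg L\lg q)$ using $\Mss(\lg q)/\lg q = O(\lg\lg L\,\lg\lg\lg L)$. The only cosmetic difference is that you absorb the residual $d'$ radix-two stages into a single direct FFT pass rather than treating them as $d'$ separate butterfly layers, which changes nothing in the accounting.
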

\begin{proof}
Let $d := \lfloor \lg L / \lg S \rfloor$,
so that $\lg L = d \lg S + d'$ where $0 \leq d' < \lg S$.
Applying the Cooley--Tukey method \cite{CT-fft}
to the factorisation $L = S^d 2^{d'}$,
the given transform of length $L$ may be decomposed into~$d$ layers,
each consisting of $L/S$ transforms of length~$S$ (with respect to $\omega$),
followed by $d'$ layers, each consisting of $L/2$ transforms of length~$2$.
Between each of these layers,
we must perform $O(L)$ multiplications by ``twiddle factors'' in $\ZZ/q\ZZ$,
which are given by certain powers of $\zeta$.
(For further details of the Cooley--Tukey decomposition,
see for example \cite[\S2.3]{HvdHL-mul}.)

The total cost of the twiddle factor multiplications,
including the cost of computing the twiddle factors themselves, is
\begin{align*}
 O((d + d') L \Mss(\lg q))
  &= O\left( \left(\frac{\lg L}{(\lg \lg L)^2} + (\lg \lg L)^2\right) L \lg q \lg \lg q \lg \lg \lg q \right) \\
  &= O\left( \frac{\lg L}{(\lg \lg L)^2} L \lg q \lg \lg L \lg \lg \lg L\right) = O(L \lg L \lg q).
\end{align*}
This bound also covers the cost of the length~$2$ transforms (`butterflies'),
each of which requires one addition and one subtraction in $\ZZ/q\ZZ$.

In the Turing model, we must also account for the cost of rearranging data
so that the inputs for each layer of short DFTs are stored sequentially on tape.
The cost per layer is $O(L \lg S \lg q)$ bit operations,
so $O(L \lg L \lg q)$ altogether
(see~\cite[\S2.3]{HvdHL-mul} for further details).
\end{proof}

\step{reduce to short convolutions.}
Given polynomials $G_1, \ldots, G_{L/S}, H \in (\ZZ/q\ZZ)[x]/(x^S - 1)$ as input,
let $\MMshort(L,q)$ denote the time required to compute the products
$G_1 H, \ldots, G_{L/S} H$.
\begin{lem}
\label{lem:step-convolution}
We have $\TTshort(L,q) < \MMshort(L,q) + O(L (\lg \lg L)^2 \lg q)$.
\end{lem}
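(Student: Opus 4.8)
The plan is to realise each short DFT of length $S$ as a cyclic convolution of length $S$ by means of Bluestein's trick. Recall that Bluestein's identity rewrites the evaluation $\hat F_j(\omega^k) = \sum_{i} (F_j)_i \omega^{ik}$ using the substitution $ik = \binom{i+k}{2} - \binom{i}{2} - \binom{k}{2}$ (or equivalently $2ik = i^2 + k^2 - (k-i)^2$), which converts the transform into a multiplication by a fixed ``chirp'' sequence, followed by a convolution against another fixed chirp sequence, followed by a final multiplication by a chirp sequence. Since $S$ is a power of two and $S \divides L$, all the required roots of unity — in particular $\omega = \zeta^{L/S}$ and a square root of $\omega$ needed for the half-integer exponents — already live in $\ZZ/q\ZZ$ as powers of $\zeta$; this is where we use that $2m \divides L$ gave us access to enough roots of unity, and more simply that $2S \divides L$ for large $L$, so the necessary $2S$-th root of unity is a power of $\zeta$.

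First I would fix, once and for all, the chirp polynomial $H \in (\ZZ/q\ZZ)[x]/(x^S-1)$ whose coefficients are the relevant powers of a $2S$-th root of unity, together with the two length-$S$ weight vectors used before and after the convolution; computing all of these costs $O(S \Mss(\lg q))$ bit operations, which is negligible. Then, for each $j$, I would form the weighted input $G_j$ (one pointwise multiplication of two length-$S$ vectors over $\ZZ/q\ZZ$, i.e. $O(S \Mss(\lg q))$ operations using Proposition~\ref{prop:arithmetic}), invoke the $\MMshort$ routine on the batch $G_1,\ldots,G_{L/S}$ against the common multiplier $H$ to obtain the $L/S$ products, and finally apply the post-weights (again $O(S\Mss(\lg q))$ per transform). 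Summing over the $L/S$ transforms, the pre- and post-processing costs $O((L/S)\cdot S \Mss(\lg q)) = O(L \Mss(\lg q)) = O(L \lg q \lg\lg q\lg\lg\lg q)$, which by \eqref{eq:lgq-bound} is $O(L (\lg\lg L)^2 \lg q)$ — in fact comfortably smaller — matching the claimed overhead. I would also need to note the Turing-model bookkeeping: rearranging the $L/S$ length-$S$ vectors so that the inputs to $\MMshort$ are presented sequentially on tape costs $O(L \lg S \lg q) = O(L (\lg\lg L)^2 \lg q)$, again absorbed into the error term.

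The one genuine subtlety, rather than the main obstacle, is the parity of $S$ and the existence of the half-integer chirp exponents: Bluestein's method for a length-$S$ transform with $S$ even is cleanest when one works with a primitive $2S$-th root of unity, and I would want to confirm that $\zeta^{L/(2S)}$ is well-defined (i.e. $2S \divides L$) and squares to $\omega$; this follows for large $L$ since $S = 2^{(\lg\lg L)^2} = o(L)$ and $S$ is a power of two, exactly as in the paragraph preceding this section establishing $2m \divides L$. Everything else is routine: the decomposition is an algebraic identity, the arithmetic is carried out in $\theta$-representation via Proposition~\ref{prop:arithmetic} at the stated per-operation cost, and the batching against a single common multiplier $H$ is precisely what the interface of $\MMshort$ was set up to exploit. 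I do not expect any step here to present real difficulty; the heavy lifting has been deferred to the analysis of $\MMshort$ in the steps that follow.
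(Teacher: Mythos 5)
Your proposal is correct and follows essentially the same route as the paper: Bluestein's trick converts each length-$S$ DFT into a length-$S$ cyclic convolution against a fixed chirp polynomial $H$ depending only on $\zeta$, plus $O(S)$ pre/post multiplications per transform, costing $O((L/S)\cdot S\Mss(\lg q)) = O(L(\lg\lg L)^2\lg q)$ in total. The paper's proof is terser (it defers details to the cited reference), but your added remarks about the $2S$-th root of unity and the Turing-tape rearrangement are accurate and in the spirit of that reference.
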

\begin{proof}
We use Bluestein's method \cite{Blu-dft},
which reduces the the problem of computing the
DFT of $F \in (\ZZ/q\ZZ)[x]/(x^S - 1)$
to the problem of computing the product of certain polynomials
$G, H \in (\ZZ/q\ZZ)[x]/(x^S - 1)$,
plus $O(S)$ auxiliary multiplications in $\ZZ/q\ZZ$
(for further details see \cite[\S2.5]{HvdHL-mul}).
Here $G$ depends on $F$ and~$\zeta$, but $H$ depends only on $\zeta$.
The total cost of the auxiliary multiplications is
  \[ O((L/S) S \Mss(\lg q)) = O(L \lg q \lg \lg q \lg \lg \lg q) = O(L (\lg \lg L)^2 \lg q). \qedhere \]
\end{proof}

\step{reduce to bivariate products over $\ZZ$.}
Given as input polynomials
$\tilde G_1, \ldots, \tilde G_{L/S}, \tilde H \in \ZZ[x,y]/(x^S - 1, y^m + 1)$,
all whose of coefficients are bounded in absolute value by $m q^{1/m}$,
let $\MMbivariate(L,q)$ denote the cost of computing the products
$\tilde G_1 \tilde H, \ldots, \tilde G_{L/S} \tilde H$.
\begin{lem}
\label{lem:step-bivariate}
We have $\MMshort(L,q) < \MMbivariate(L,q) + O(L (\lg \lg L)^2 \lg q)$.
\end{lem}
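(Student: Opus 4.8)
The plan is to push the short cyclic convolutions over $\ZZ/q\ZZ$ down to bivariate integer products by unfolding the $\theta$-representation, and then to repair the (modestly) oversized output coefficients using the reduction algorithm of Proposition~\ref{prop:reduction}.

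First I would reinterpret the data with no real work. Each input $G_i = \sum_{j=0}^{S-1} g_{i,j} x^j \in (\ZZ/q\ZZ)[x]/(x^S-1)$ has every coefficient $g_{i,j}$ stored in $\theta$-representation, namely as a polynomial $G_{i,j}(y) \in \ZZ[y]/(y^m+1)$ with $\norm{G_{i,j}} \leq m q^{1/m}$, and likewise $H$ corresponds to some $\tilde H(x,y) \in \ZZ[x,y]/(x^S-1,y^m+1)$. Putting $\tilde G_i(x,y) := \sum_j G_{i,j}(y) x^j$ produces polynomials in $\ZZ[x,y]/(x^S-1,y^m+1)$ whose coefficients are bounded in absolute value by $m q^{1/m}$, exactly as required by the definition of $\MMbivariate$; as bit strings these are the same data as the inputs, so this costs at most a tape rearrangement of $O(L\lg q)$ bit operations. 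Since $\theta^m=-1$, reducing modulo $y^m+1$ is compatible with the substitution $y\mapsto\theta$, so $\tilde G_i(x,\theta)\equiv G_i$ and $\tilde H(x,\theta)\equiv H$ modulo $q$; hence the $j$-th $x$-coefficient $F_{i,j}(y)\in\ZZ[y]/(y^m+1)$ of the product $\tilde G_i\tilde H$ satisfies $F_{i,j}(\theta)\equiv(\text{$j$-th coefficient of }G_iH)\pmod q$.

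Next I would compute all $L/S$ products $\tilde G_i\tilde H$ in time $\MMbivariate(L,q)$ by hypothesis, and bound the sizes of their coefficients. A product in $\ZZ[y]/(y^m+1)$ of two polynomials of norm $\leq m q^{1/m}$ has norm $\leq m^3 (q^{1/m})^2$, and the length-$S$ cyclic convolution in $x$ multiplies this by at most $S$, so $\norm{F_{i,j}}\leq S m^3 (q^{1/m})^2$ and therefore $m\lg\norm{F_{i,j}} \leq m\lg S + 3m\lg m + 2\lg q + O(m)$. Now $\lg S = (\lg\lg L)^2$, while \eqref{eq:lgq-bound} and \eqref{eq:m-interval} give $m\lg S = O(\lg q/\lg\lg\lg L)$ and, using $\lg m = O(\lg\lg q)$ together with $\lg\lg q \sim \lg\lg L$, also $m\lg m = o(\lg q)$; hence $m\lg\norm{F_{i,j}} = 2\lg q + o(\lg q)$ and $\lceil m\lg\norm{F_{i,j}}/\lg q\rceil = O(1)$ for large $L$. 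Applying Proposition~\ref{prop:reduction} to each $F_{i,j}$ therefore yields a $\theta$-representation for the corresponding coefficient of $G_iH$ in $O(\Mss(\lg q))$ bit operations.

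Finally I would total the cost: there are $L/S$ products, each contributing $S$ values of $j$, so $L$ reductions in all, at cost $O(L\Mss(\lg q))$; as in Lemma~\ref{lem:step-convolution}, \eqref{eq:lgq-bound} makes this $O(L(\lg\lg L)^2\lg q)$, which absorbs the rearrangement cost and the earlier $O(L\lg q)$, giving the stated bound. The one step demanding genuine care is the coefficient-size estimate: one must verify that the extra factors $S$ (from the degree-$S$ convolution) and $m$ (from coefficient growth in $\ZZ[y]$) are small enough, relative to $q^{1/m}$, to be swallowed by the $o(\lg q)$ slack in $m\lg\norm{F_{i,j}}$, so that Proposition~\ref{prop:reduction} still costs only $O(\Mss(\lg q))$ per coefficient; the rest is routine bookkeeping.
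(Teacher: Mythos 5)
Your proposal follows the paper's proof almost verbatim: reinterpret the $\theta$-represented coefficients as bivariate integer polynomials with $\norm{\cdot}\leq m q^{1/m}$, compute the products in time $\MMbivariate(L,q)$, bound $\norm{F_{i,j}}\leq S m^3 (q^{1/m})^2$, and invoke Proposition~\ref{prop:reduction} on each of the $L$ coefficients at $O(\Mss(\lg q))$ apiece. The only cosmetic difference is that the paper records the explicit bound $\lg\norm{A_{ij}}<(2+3/\lg\lg\lg L)\lg q/m$ (since that constant is reused in the next step when choosing $q'$), whereas you content yourself with $m\lg\norm{F_{i,j}}=2\lg q+o(\lg q)$, which is all this lemma needs.
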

\begin{proof}
We are given as input polynomials
$G_1, \ldots, G_{L/S}, H \in (\ZZ/q\ZZ)[x]/(x^S - 1)$.
Since their coefficients are given in $\theta$-representation,
we may immediately reinterpret them as polynomials 
$\tilde G_1, \ldots, \tilde G_{L/S}, \tilde H \in \ZZ[x,y]/(x^S - 1, y^m + 1)$,
with coefficients bounded by $mq^{1/m}$.
By definition of $\theta$-representation, we have
$\tilde H(x,\theta) = H(x) \pmod q$,
and similarly for the $G_i$.

After computing the products $\tilde G_i \tilde H$ for $i = 1, \ldots, L/S$,
suppose that
\begin{equation*}
  (\tilde G_i \tilde H)(x,y) = \sum_{j=0}^{S-1} A_{ij}(y) x^j,
   \qquad A_{ij} \in \ZZ[y]/(y^m + 1).
\end{equation*}
Then we have $(G_i H)(x) = (\tilde G_i \tilde H)(x, \theta)
= \sum_j A_{ij}(\theta) x^j \pmod q$ for each $i$.
Therefore, to compute the desired products $G_i H$
with coefficients in $\theta$-representation,
it suffices to apply Proposition \ref{prop:reduction} to each $A_{ij}$,
to compute $\theta$-representations for all of the $A_{ij}(\theta)$.

Let us estimate the cost of the invocations of Proposition \ref{prop:reduction}.
We have $\norm{A_{ij}} \leq S m (m q^{1/m})^2 = S m^3 (q^{1/m})^2$, so
  \[ \lg \norm{A_{ij}} \leq \frac{2 \lg q} m + \lg S + 3 \lg m
                       < \frac{2 \lg q}m + (\lg \lg L)^2 + O(\lg \lg L). \]
From \eqref{eq:m-interval} we have
$\frac{\lg q}m > \frac12(\lg \lg L)^2 \lg \lg \lg L$, so for large $L$,
\begin{equation}
\label{eq:normA-bound}
 \lg \norm{A_{ij}} < \left(2 + \frac{3}{\lg \lg \lg L}\right) \frac{\lg q}{m}.
\end{equation}
The cost of applying Proposition \ref{prop:reduction} for all $A_{ij}$ is thus
  \[ O\left((L/S) S \left\lceil \frac{m \lg \norm{A_{ij}}}{\lg q} \right\rceil \Mss(\lg q)\right)
   = O(L \Mss(\lg q)) = O(L (\lg \lg L)^2 \lg q). \qedhere \]
\end{proof}

\step{Reduce to bivariate products over $\ZZ/q'\ZZ$.}
Let $p'$ be the smallest prime such that $p' = 1 \pmod S$;
by Linnik's theorem we have $p' = O(S^{5.2})$.
Put $q' := (p')^{\alpha'}$ where
\begin{equation*}
  \alpha' := \left\lceil \left(2 + \frac{4}{\lg \lg \lg L}\right) \frac{\lg q}m \bigg\slash \lg \lfloor p'/2 \rfloor \right\rceil.
\end{equation*}
We have the following bounds for $q'$.
\begin{lem}
\label{lem:qprime-bound}
Let $A_{ij}$ be as in the proof of Lemma \ref{lem:step-bivariate},
for $i = 1, \ldots, L/S$ and $j = 0, \ldots, S-1$.
Then $q' \geq 4 \norm{A_{ij}}$ and
 \[ \lg q' < \left(2 + \frac{O(1)}{\lg \lg \lg L}\right) \frac{\lg q}{m}. \]
\end{lem}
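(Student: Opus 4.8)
The plan is to prove both assertions by direct estimation, starting from the definition of $q'$ and the bound \eqref{eq:normA-bound} established in the proof of Lemma~\ref{lem:step-bivariate}. First I would record the elementary consequences of Linnik's theorem for $p'$: since $p' = O(S^{5.2})$ and $S = 2^{(\lg \lg L)^2}$, we have $\lg p' = O((\lg \lg L)^2)$, and in particular $\lg \lfloor p'/2 \rfloor \geq \lg p' - O(1) \geq \frac12 \lg p'$ for large $L$ (using $p' \geq S+1 \geq 3$). This is just bookkeeping, but it is needed to control the ceiling in the definition of $\alpha'$.

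For the lower bound $q' \geq 4\norm{A_{ij}}$, the idea is that $\alpha'$ was chosen precisely so that $(p')^{\alpha'}$ slightly exceeds the largest possible coefficient, with the factor $4$ built in. Concretely, I would argue $\lg q' = \alpha' \lg p' \geq \alpha' \lg \lfloor p'/2 \rfloor \geq (2 + 4/\lg\lg\lg L)(\lg q)/m$ directly from the ceiling, and then compare against $\lg(4\norm{A_{ij}})$. From \eqref{eq:normA-bound} we have $\lg\norm{A_{ij}} < (2 + 3/\lg\lg\lg L)(\lg q)/m$, so $\lg(4\norm{A_{ij}}) < (2 + 3/\lg\lg\lg L)(\lg q)/m + 2$. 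The gap between the exponents $(2+4/\lg\lg\lg L)(\lg q)/m$ and $(2+3/\lg\lg\lg L)(\lg q)/m$ is $(\lg q)/(m \lg\lg\lg L)$, which by \eqref{eq:m-interval} is at least $\frac12 (\lg\lg L)^2 / \lg\lg\lg L \cdot \lg\lg\lg L \big/ \lg\lg\lg L$... more carefully, $(\lg q)/m > \frac12(\lg\lg L)^2\lg\lg\lg L$, so the gap exceeds $\frac12(\lg\lg L)^2 \to \infty$, which comfortably dominates the additive $2$. Hence $\lg q' > \lg(4\norm{A_{ij}})$ for large $L$, giving the first claim.

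For the upper bound on $\lg q'$, I would peel off the ceiling: $\alpha' \leq (2 + 4/\lg\lg\lg L)(\lg q)/(m \lg\lfloor p'/2\rfloor) + 1$, so
\begin{equation*}
 \lg q' = \alpha' \lg p' \leq \left(2 + \frac{4}{\lg\lg\lg L}\right)\frac{\lg q}{m}\cdot\frac{\lg p'}{\lg\lfloor p'/2\rfloor} + \lg p'.
\end{equation*}
The ratio $\lg p'/\lg\lfloor p'/2\rfloor$ is $1 + O(1/\lg p') = 1 + o(1)$, so the main term is $(2 + O(1)/\lg\lg\lg L)(\lg q)/m$; and the additive $\lg p' = O((\lg\lg L)^2)$ is negligible compared to $(\lg q)/m > \frac12(\lg\lg L)^2\lg\lg\lg L$, so it is absorbed into the $O(1)/\lg\lg\lg L$ error term. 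This yields $\lg q' < (2 + O(1)/\lg\lg\lg L)(\lg q)/m$, as claimed.

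The main obstacle is not any single hard step but rather keeping the hierarchy of magnitudes straight: one must repeatedly use \eqref{eq:m-interval} to see that $(\lg q)/m$ grows like $(\lg\lg L)^2\lg\lg\lg L$, hence dwarfs both the additive constants (like $2$) and the $\lg p' = O((\lg\lg L)^2)$ term, so that all such contributions can be folded into the $O(1)/\lg\lg\lg L$ relative error. The one genuinely delicate point is verifying that the slightly larger constant $4/\lg\lg\lg L$ in the definition of $\alpha'$ (versus the $3/\lg\lg\lg L$ in \eqref{eq:normA-bound}) provides enough slack to absorb both the rounding in the ceiling and the loss from replacing $p'$ by $\lfloor p'/2\rfloor$; this is exactly why the definition of $\alpha'$ uses $\lfloor p'/2 \rfloor$ rather than $p'$ in the denominator, and why the numerator constant is $4$ rather than $3$.
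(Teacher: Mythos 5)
Your proof is correct and follows essentially the same approach as the paper: lower bound from the ceiling in the definition of $\alpha'$ compared against \eqref{eq:normA-bound}, upper bound by peeling off the ceiling and absorbing both the $+1$ and the ratio $\lg p'/\lg\lfloor p'/2\rfloor$ into the $O(1)/\lg\lg\lg L$ error using $(\lg q)/m \asymp (\lg\lg L)^2 \lg\lg\lg L$ and $\lg p' \asymp (\lg\lg L)^2$. One small slip: $\lg q' = \alpha' \lg p'$ is not an identity (the ceiling does not commute with multiplication by $\alpha'$); instead use $\lg q' \geq \log_2 q' = \alpha' \log_2 p' \geq \alpha' \lg\lfloor p'/2\rfloor$ for the lower bound and $\lg q' \leq \alpha' \lg p'$ for the upper bound, as the paper does.
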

\begin{proof}
In what follows, we frequently use the fact that
$\frac{\lg q}{m} \asymp (\lg \lg L)^2 \lg \lg \lg L$
(see~\eqref{eq:m-interval}).
Now, observe that 
$\log_2 q' = \alpha' \log_2 p' \geq \alpha' \lg\lfloor p'/2 \rfloor$,
so by \eqref{eq:normA-bound},
 \[ \log_2 q' \geq \left(2 + \frac{4}{\lg \lg \lg L}\right) \frac{\lg q}m
              \geq \left(2 + \frac{3}{\lg \lg \lg L}\right) \frac{\lg q}m + 2
              \geq \lg \norm{A_{ij}} + 2. \]
Thus $q' \geq 4 \norm{A_{ij}}$.
For the other direction, since
$\lg p' \asymp \lg S = (\lg \lg L)^2$, we have
\begin{multline*}
 \lg q' \leq \alpha' \lg p' \leq \left[\frac{\left(2 + \frac{4}{\lg \lg \lg L}\right) \frac{\lg q}m}{\lg \lfloor p'/2 \rfloor} + 1\right] \lg p'
 < \left(2 + \frac{O(1)}{\lg \lg \lg L}\right) \frac{\lg q}{m} \cdot \frac{\lg p'}{\lg \lfloor p'/2 \rfloor},
\end{multline*}
and $\lg p' / \lg \lfloor p'/2 \rfloor < 1 + O(1)/\lg p' < 1 + O(1)/(\lg \lg L)^2$.
\end{proof}

Now, given as input polynomials
$g_1, \ldots, g_{L/S}, h \in (\ZZ/q'\ZZ)[x,y]/(x^S - 1, y^m + 1)$,
let $\MMbivariateprime(L,q)$ denote the cost of computing the products
$g_1 h, \ldots, g_{L/S} h$, where all input and output coefficients
in $\ZZ/q'\ZZ$ are in standard representation.
\begin{lem}
\label{lem:step-qprime}
We have $\MMbivariate(L,q) < \MMbivariateprime(L,q) + O(L \lg q)$.
\end{lem}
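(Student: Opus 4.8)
The plan is to prove Lemma~\ref{lem:step-qprime} by the usual ``reduce, compute, recover'' strategy: reduce the integer coefficients of the inputs modulo $q'$, invoke the $\MMbivariateprime$ algorithm to obtain the products over $\ZZ/q'\ZZ$, and then lift the resulting coefficients back to $\ZZ$ by choosing balanced representatives in $(-q'/2, q'/2]$, using the inequality $q' \geq 4\norm{A_{ij}}$ supplied by Lemma~\ref{lem:qprime-bound} to guarantee that this recovery is exact. The point that makes this work cleanly is that reduction modulo $q'$ commutes with the quotient $\ZZ[x,y]/(x^S-1,y^m+1) \to (\ZZ/q'\ZZ)[x,y]/(x^S-1,y^m+1)$, so if we set $g_i := \tilde G_i \bmod q'$ and $h := \tilde H \bmod q'$, then the coefficients of the product $g_i h$ computed over $\ZZ/q'\ZZ$ are precisely the reductions modulo $q'$ of the coefficients of $\tilde G_i \tilde H$, i.e. of the coefficients of the polynomials $A_{ij}(y)$. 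Thus the only work beyond the single call to $\MMbivariateprime$ is two rounds of coefficient-wise conversion, each carried out in a single linear pass over the data.

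First I would handle the input conversion. The coefficients of $\tilde G_i$ and $\tilde H$ are integers of absolute value at most $m q^{1/m}$. The lower bound $\lg q' \geq (2 + 4/\lg\lg\lg L)\,\lg q/m$ established in the proof of Lemma~\ref{lem:qprime-bound}, combined with $\lg(m q^{1/m}) = (1 + o(1))\,\lg q/m$ (since $m\lg m = o(\lg q)$ by~\eqref{eq:m-interval}), shows that $m q^{1/m} < q'$ for large $L$. Hence each input coefficient already lies in $(-q', q')$, and converting it to standard representation requires only a single conditional addition of $q'$, at a cost of $O(\lg q')$ bit operations. There are $(L/S + 1)\,S m = O(L m)$ such coefficients, and $\lg q' = O(\lg q/m)$ by Lemma~\ref{lem:qprime-bound}, so the total cost of the input conversion is $O(L m \cdot \lg q/m) = O(L \lg q)$.

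Next I would handle the output conversion. By Lemma~\ref{lem:qprime-bound} every coefficient of every $A_{ij}(y)$ has absolute value at most $\norm{A_{ij}} \leq q'/4 < q'/2$, so from its standard representative $c \in [0, q')$ we recover the true integer coefficient as $c$ if $c < q'/2$ and as $c - q'$ otherwise; this again costs $O(\lg q')$ per coefficient. There are $(L/S)\,S m = O(L m)$ output coefficients, so the total is once more $O(L m \cdot \lg q/m) = O(L \lg q)$. Adding the cost of the two conversion passes to the cost $\MMbivariateprime(L,q)$ of the single invocation of the bivariate algorithm over $\ZZ/q'\ZZ$ yields the claimed bound.

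There is no deep obstacle here; the argument is a routine modular reduction followed by a Chinese-remainder-style lift. The one thing that will require care is purely bookkeeping: one must check that the two quantitative facts packaged into Lemma~\ref{lem:qprime-bound} are exactly the ones needed --- the inequality $q' \geq 4\norm{A_{ij}}$ for correctness of the balanced-residue recovery, and $\lg q' = O(\lg q/m)$ together with $m q^{1/m} < q'$ for the linear cost estimate --- so that the overhead of the conversions is genuinely $O(L \lg q)$ and does not pick up extra $\Mss$-type factors that would arise if one had to perform honest divisions on the input coefficients.
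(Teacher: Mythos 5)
Your proof is correct and follows essentially the same reduce--compute--recover strategy as the paper, with the additional welcome detail that the bound $m q^{1/m} < q'$ makes both conversion passes cost only a comparison and conditional addition per coefficient rather than a genuine division. The one small item the paper's proof includes that you omit is the cost of locating $p'$ and computing $\alpha'$ and $q'$ --- this takes $S^{O(1)} = O(L)$ bit operations and is absorbed into the $O(L\lg q)$ term, so the bound is unaffected.
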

\begin{proof}
We may locate $p'$ by testing $S+1, 2S+1, \ldots$,
in $S^{O(1)} = 2^{O((\lg \lg L)^2)} = O(L)$ bit operations,
and we may easily compute $\alpha'$ and $q'$ within the same time bound.
Now, given as input 
$\tilde G_1, \ldots, \tilde G_{L/S}, \tilde H \in \ZZ[x,y]/(x^S - 1, y^m + 1)$,
we first convert them (in linear time) to polynomials
$g_1, \ldots, g_{L/S}, h \in (\ZZ/q'\ZZ)[x,y]/(x^S - 1, y^m + 1)$,
and then multiply them in the latter ring.
The bound $q' \geq 4 \norm{A_{ij}}$ in Lemma \ref{lem:qprime-bound}
shows that the products over $\ZZ$ may be
unambiguously recovered from those over $\ZZ/q'\ZZ$;
again, this lifting can be done in linear time.
\end{proof}

\step{reduce to DFTs over $\ZZ/q'\ZZ$.}
In this step we will call \textsc{Transform} recursively to handle
certain transforms of length $S$ over $\ZZ/q'\ZZ$.
To check that these calls are permissible,
we must verify the precondition corresponding to \eqref{eq:lgq-bound},
namely $\lg S \leq \lg q' \leq 3 \lg S \lg \lg S$.
The first inequality is clear since $q' \geq p' > S$.
The second inequality follows from \eqref{eq:m-interval},
Lemma \ref{lem:qprime-bound}, and the observation that
$\lg S \lg \lg S \geq (\lg \lg L)^2 \lg \lg \lg L$.

\begin{lem}
\label{lem:step-dft}
We have $\MMbivariateprime(L, q) < \left(\frac{2L}{S} + 1\right) m \TT(S, q') + O(L (\lg \lg L)^2 \lg q)$.
\end{lem}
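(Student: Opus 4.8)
The plan is to compute the products $g_i h$ as cyclic convolutions of length $S$ in the variable $x$ over the coefficient ring $R' := (\ZZ/q'\ZZ)[y]/(y^m+1)$, evaluating these convolutions by means of DFTs of length $S$. Since $p' \equiv 1 \pmod S$, the field $\ZZ/p'\ZZ$ contains a primitive $S$-th root of unity, and (by Hensel lifting, using that $S$ is invertible modulo $q'$ since $p' > S$) this lifts to a principal $S$-th root of unity $\zeta' \in \ZZ/q'\ZZ$. We compute such a $\zeta'$, together with the precomputed data $\PP(q', m', \theta')$ required by the recursive calls, once, in $(q')^{1+o(1)} = O(L)$ bit operations; here we use that $\lg q' = O((\lg \lg L)^2 \lg \lg \lg L) = o(\lg L)$ by Lemma \ref{lem:qprime-bound} and \eqref{eq:m-interval}. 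The one structural observation is that a DFT of length $S$ over $R'$ with respect to the \emph{scalar} root of unity $\zeta' \in \ZZ/q'\ZZ$ acts coordinatewise on the $m$ coefficients of $y$: writing an input as $\sum_{k=0}^{m-1} c_k(x) y^k$ with $c_k \in (\ZZ/q'\ZZ)[x]/(x^S-1)$, its DFT is obtained by applying an ordinary length-$S$ DFT over $\ZZ/q'\ZZ$ to each $c_k$ separately. Each such scalar DFT is exactly one recursive call to \textsc{Transform}$(S, q')$, permissible by the precondition verified just before the lemma, and the convolution theorem over $R'$ is valid because $\zeta'$ is principal.

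Next I count the recursive calls. I first compute the forward DFT of $h$ in the $x$-direction, at a cost of $m$ calls. For each $i = 1, \ldots, L/S$ I compute the forward DFT of $g_i$, a further $(L/S)\,m$ calls. I then form, for each $i$ and each $j = 0, \ldots, S-1$, the pointwise product in $R'$ of the $j$-th DFT component of $g_i$ with that of $h$, a negacyclic product of length $m$ over $\ZZ/q'\ZZ$. Finally, for each $i$ I recover $g_i h$ by an inverse length-$S$ DFT in $x$; since the inverse DFT with respect to $\zeta'$ equals a forward DFT with respect to $\zeta'$ together with a reversal of indices and a scaling by $1/S$, this costs $m$ further calls per $i$ (and no precomputed data for $\zeta'^{-1}$ is needed), hence $(L/S)\,m$ calls in total. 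Altogether this gives $m + (L/S)\,m + (L/S)\,m = (2L/S + 1)\,m$ calls, accounting for the main term $(2L/S + 1)\,m\,\TT(S, q')$.

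It remains to show that the auxiliary costs are $O(L(\lg\lg L)^2 \lg q)$. There are $(L/S)\cdot S = L$ pointwise products, each a product of polynomials of degree $< m$ over $\ZZ/q'\ZZ$; performed by Kronecker substitution, the packed integers have $O(m\lg q' + m\lg m) = O(\lg q)$ bits, using $m\lg q' = O(\lg q)$ from Lemma \ref{lem:qprime-bound} and $\lg m = O(\lg\lg L) = o(\lg q/m)$, and the subsequent reductions modulo $y^m+1$ and $q'$ cost $O(m\Mss(\lg q')) = O(\Mss(\lg q))$ by the identities $n\Mss(k)\leq\Mss(nk)$ and $\Mss(dk)=O(\Mss(k))$. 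Since $\lg\lg q \sim \lg\lg L$ and $\lg\lg\lg q \leq \lg\lg L$ for large $L$, we have $\Mss(\lg q) = O((\lg\lg L)^2\lg q)$, so the pointwise products cost $O(L(\lg\lg L)^2\lg q)$ in all. The $O(Lm)$ scalings by $1/S$ in $\ZZ/q'\ZZ$ cost $O(Lm\Mss(\lg q')) = O(L\Mss(\lg q))$, within the same bound; the index reversals and the computation of $p'$, $\alpha'$, $q'$, $\zeta'$ and the recursive precomputed data are $O(L)$. Finally, in the Turing model we must transpose coefficient arrays so that the input of each recursive call and each pointwise product is stored contiguously (converting between ``$x$-major'' and ``$y$-major'' layouts); the total data has $O(Lm\lg q') = O(L\lg q)$ bits, and each transposition incurs an extra factor $O(\lg\min(S,m)) = O(\lg S) = O((\lg\lg L)^2)$, so this too is $O(L(\lg\lg L)^2\lg q)$.

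I expect the only genuinely delicate point to be the last one: carefully verifying that every auxiliary cost --- the $L$ pointwise products, the $O(Lm)$ scalings, the several data transpositions, and the one-time auxiliary precomputations --- really does collapse into the error term $O(L(\lg\lg L)^2\lg q)$, which relies throughout on the estimates $\lg q/m \asymp (\lg\lg L)^2\lg\lg\lg L$, $m\lg q' = O(\lg q)$, and $\lg\lg q \sim \lg\lg L$. The reduction of the $x$-direction DFT over $R'$ to $m$ scalar DFTs over $\ZZ/q'\ZZ$, which produces the exact count $(2L/S+1)\,m$, is otherwise the only idea involved.
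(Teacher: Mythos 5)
Your proposal is correct and follows essentially the same route as the paper: compute $\zeta'$ and $\PP(q',m',\theta')$ once in $(q')^{1+o(1)}=O(L)$ time, apply \textsc{Transform} coordinatewise in $y$ for the $(L/S+1)m$ forward and $(L/S)m$ inverse DFTs in $x$, and handle the $L$ pointwise products in $(\ZZ/q'\ZZ)[y]/(y^m+1)$ by Kronecker substitution within $O(\Mss(\lg q))$ each. (One tiny slip: since $m\ll S$, the transposition factor is $\lg\min(S,m)=\lg m=O(\lg\lg L)$, not $\lg S$, but this only makes your bound looser, not wrong.)
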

\begin{proof}
We start by computing various data needed for the recursive calls.
We may compute a primitive $S$-th root of unity in $\ZZ/p'\ZZ$ in
$(p')^{O(1)} = O(L)$ bit operations,
and then Hensel lift it to a principal $S$-th root of unity
$\zeta' \in\ZZ/q'\ZZ$ in $(\lg p' \lg q')^{O(1)} = O(L)$ bit operations.
Just as before, we define $m'$ to be the unique power of two in the interval
\begin{equation}
\label{eq:mprime-interval}
 \frac{\lg q'}{(\lg \lg S)^2 \lg \lg \lg S}
    \leq m'
    < \frac{2 \lg q'}{(\lg \lg S)^2 \lg \lg \lg S},
\end{equation}
and set $\theta' := (\zeta')^{S/2m'}$.
Using Lemmas~\ref{lem:compute-P}, \ref{lem:compute-J},
and~\ref{lem:compute-D}, we may compute $\PP(q', m', \theta')$ in
$(q')^{1+o(1)} = 2^{O((\lg \lg L)^2 \lg \lg \lg L)} = O(L)$ bit operations.

Now suppose we wish to compute the products
$g_1 h, \ldots, g_{L/S} h$, for polynomials
$g_1, \ldots, g_{L/S}, h \in (\ZZ/q'\ZZ)[x,y]/(x^S - 1, y^m + 1)$.
We use the following algorithm.

First we use \textsc{Transform} to transform all $L/S + 1$ polynomials
with respect to $x$, that is,
we compute $g_i((\zeta')^j, y)$ and $h((\zeta')^j, y)$ as elements of
$(\ZZ/q'\ZZ)[y]/(y^m + 1)$, for $i = 1, \ldots, L/S$ and $j = 0, \ldots, S-1$.
Since \textsc{Transform} must be applied separately to every
coefficient $1, y, \ldots, y^{m-1}$,
the total number of calls is $(L/S + 1) m$.
Accessing the coefficient of each $y^k$ also implies
a number of array transpositions whose total cost is
$O((L/S) S m \lg m \lg q') = O(L \lg \lg L \lg q)$.

Next we compute the $(L/S) S = L$ pointwise products
$g_i((\zeta')^j, y) h((\zeta')^j, y)$.
Using Kronecker substitution, each such product in $(\ZZ/q'\ZZ)[y]/(y^m + 1)$
costs $O(\Mss(\lg q))$ bit operations,
as $m(\lg q' + \lg m) = O(\lg q)$.

Finally, we perform $(L/S) m$ inverse transforms with respect to $x$.
It is well known that these may be computed by the same algorithm as the
forward transform, with $\zeta'$ replaced by $(\zeta')^{-1}$,
followed by a division by $S$.
The division may be accomplished by simply multiplying through by
$S^{-1} \pmod{q'}$;
this certainly costs no more than the pointwise multiplication step.
\end{proof}

\begin{cor}
\label{cor:main}
We have $\TT(L, q) < \frac{\lg L}{(\lg \lg L)^2} \left(\frac{2L}{S} + 1\right) m \TT(S, q') + O(L \lg L \lg q)$.
\end{cor}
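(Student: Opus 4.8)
The plan is to compose, in order, the six reduction lemmas established in this section. First I would start from Lemma~\ref{lem:step-convert}, which bounds $\TT(L,q)$ in terms of $\TTlong(L,q)$; then substitute Lemma~\ref{lem:step-short} to replace $\TTlong(L,q)$ by $\frac{\lg L}{(\lg\lg L)^2}\TTshort(L,q)$ plus an error; then Lemma~\ref{lem:step-convolution} to pass from $\TTshort(L,q)$ to $\MMshort(L,q)$; then Lemma~\ref{lem:step-bivariate} to pass to $\MMbivariate(L,q)$; then Lemma~\ref{lem:step-qprime} to pass to $\MMbivariateprime(L,q)$; and finally Lemma~\ref{lem:step-dft} to replace $\MMbivariateprime(L,q)$ by $\left(\frac{2L}{S}+1\right)m\,\TT(S,q')$ plus an error. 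Performing these substitutions literally yields a leading term of exactly $\frac{\lg L}{(\lg\lg L)^2}\left(\frac{2L}{S}+1\right)m\,\TT(S,q')$, which is the term appearing in the statement.

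Everything else is error-term bookkeeping. The errors produced by Lemmas~\ref{lem:step-convolution}, \ref{lem:step-bivariate}, and~\ref{lem:step-dft} are each of the form $O(L(\lg\lg L)^2\lg q)$, and each is multiplied by the factor $\frac{\lg L}{(\lg\lg L)^2}$ introduced by Lemma~\ref{lem:step-short}; since $\frac{\lg L}{(\lg\lg L)^2}\cdot O(L(\lg\lg L)^2\lg q)=O(L\lg L\lg q)$, these all collapse into the claimed error term. The error $O(L\lg q)$ from Lemma~\ref{lem:step-qprime}, similarly scaled, is dominated by $O(L\lg L\lg q)$, and the errors from Lemmas~\ref{lem:step-convert} and~\ref{lem:step-short} are already $O(L\lg L\lg q)$. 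Summing, the total error is $O(L\lg L\lg q)$, and the inequality follows.

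The only point requiring any care --- and the closest thing to an obstacle --- is checking that multiplying the per-short-DFT-layer error terms by the number of layers, roughly $\frac{\lg L}{(\lg\lg L)^2}$, does not inflate them beyond $O(L\lg L\lg q)$. This is immediate precisely because those error terms were arranged in Lemmas~\ref{lem:step-convolution}--\ref{lem:step-dft} to carry a compensating factor of $(\lg\lg L)^2$, which cancels against the denominator of the layer count.
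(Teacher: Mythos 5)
Your proposal is correct and is precisely the paper's own proof: the paper simply states that Corollary~\ref{cor:main} ``follows immediately by chaining together'' Lemmas~\ref{lem:step-convert} through~\ref{lem:step-dft}. Your additional bookkeeping of the error terms --- in particular the observation that the $(\lg\lg L)^2$ factors in the errors of Lemmas~\ref{lem:step-convolution}--\ref{lem:step-dft} cancel against the layer count $\frac{\lg L}{(\lg\lg L)^2}$ --- correctly fills in what the paper leaves implicit.
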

\begin{proof}
This follows immediately by chaining together Lemmas~\ref{lem:step-convert},
\ref{lem:step-short}, \ref{lem:step-convolution}, \ref{lem:step-bivariate},
\ref{lem:step-qprime}, and~\ref{lem:step-dft}.
\end{proof}

Define
 \[ \TT(L) := \max_q \frac{\TT(L, q)}{L \lg L \lg q}, \]
where the maximum is taken over
all prime powers $q$ satisfying \eqref{eq:lgq-bound}.
(For large $L$, at least one such $q$ always exists.
For example, take $\alpha := 1$ and take $q = p$ to be the smallest prime
satisfying $p = 1 \pmod L$; then Linnik's theorem implies that
\eqref{eq:lgq-bound} holds for this $q$.)

\begin{prop}
\label{prop:recurrence}
We have $\TT(L) < \left(4 + \frac{O(1)}{\lg \lg \lg L}\right) \TT(2^{(\lg \lg L)^2}) + O(1)$.
\end{prop}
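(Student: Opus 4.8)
The plan is to substitute the recursive estimate of Corollary~\ref{cor:main} into the definition of $\TT(S)$ and then simplify. First I would fix a prime power $q$ satisfying \eqref{eq:lgq-bound} that attains, or comes within an additive $O(1)$ of, the maximum defining $\TT(L)$, and let $S := 2^{(\lg\lg L)^2}$, $m$, and $q'$ be the associated quantities constructed in this section. The key point enabling the recursion is that the discussion preceding Lemma~\ref{lem:step-dft} already verifies that $q'$ satisfies the analogue of \eqref{eq:lgq-bound} with transform length $S$, i.e.\ $\lg S \leq \lg q' \leq 3\lg S\lg\lg S$; hence $q'$ is an admissible modulus for length $S$, and the definition of $\TT(S)$ yields $\TT(S, q') \leq \TT(S)\, S \lg S \lg q'$.

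Next I would plug this bound into Corollary~\ref{cor:main}. Two elementary simplifications do all the work. First, $\lg S = (\lg\lg L)^2$, so the prefactor $\tfrac{\lg L}{(\lg\lg L)^2}$ cancels against $\lg S$, leaving a clean factor $\lg L$; and $\bigl(\tfrac{2L}{S}+1\bigr)S = 2L + S = (2 + o(1))L$, where the error term, of order $S/L = 2^{(\lg\lg L)^2 - \lg L}$, is far smaller than $1/\lg\lg\lg L$ and so is absorbed below. Second, Lemma~\ref{lem:qprime-bound} gives $m\lg q' < \bigl(2 + O(1)/\lg\lg\lg L\bigr)\lg q$. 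Combining these, the main term of Corollary~\ref{cor:main} becomes $\bigl(4 + O(1)/\lg\lg\lg L\bigr)\, L\lg L\lg q\,\TT(S)$, while the additive error $O(L\lg L\lg q)$ is carried through unchanged.

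Finally I would divide through by $L\lg L\lg q$ to obtain $\TT(L,q)/(L\lg L\lg q) < \bigl(4 + O(1)/\lg\lg\lg L\bigr)\TT(S) + O(1)$, and then take the maximum over all admissible $q$. This is legitimate because the right-hand side no longer depends on $q$: the dependence of $q'$ on $q$ has already been swallowed into the $q$-independent quantity $\TT(S)$ (we only used the inequality $\TT(S,q') \leq \TT(S)\,S\lg S\lg q'$, valid for every admissible $q'$). Taking the maximum gives $\TT(L) < \bigl(4 + O(1)/\lg\lg\lg L\bigr)\TT\bigl(2^{(\lg\lg L)^2}\bigr) + O(1)$, as required. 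I do not expect a genuine obstacle here; the only points needing a little care are confirming that the constructed $q'$ really is admissible for length $S$ (so that $\TT(S)$ applies to it) and checking that the several $o(1)$ slacks are dominated by the advertised $O(1)/\lg\lg\lg L$ term, both of which are immediate from the estimates already recorded in Section~\ref{sec:main}.
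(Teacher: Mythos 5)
Your proof is correct and follows essentially the same route as the paper: divide Corollary~\ref{cor:main} by $L\lg L\lg q$, use Lemma~\ref{lem:qprime-bound} and $S/L = O(1)/\lg\lg\lg L$ to bound $(2+S/L)\,m\lg q'/\lg q$ by $4 + O(1)/\lg\lg\lg L$, and take the maximum over admissible $q$. You make one point more explicit than the paper does --- that $q'$ satisfies the admissibility condition \eqref{eq:lgq-bound} for length $S$, which is needed to invoke $\TT(S,q')/(S\lg S\lg q') \leq \TT(S)$ --- but this was already verified in the text before Lemma~\ref{lem:step-dft}, so the argument is the same.
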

\begin{proof}
Dividing the bound in Corollary \ref{cor:main} by $L \lg L \lg q$ yields
 \[ \frac{\TT(L, q)}{L \lg L \lg q} < \left(2 + \frac{S}{L}\right) \frac{m \lg q'}{\lg q} \cdot \frac{\TT(S, q')}{S \lg S \lg q'} + O(1). \]
Applying Lemma \ref{lem:qprime-bound}
and the estimate $S/L < O(1)/\lg \lg \lg L$ yields
 \[ \frac{\TT(L, q)}{L \lg L \lg q} < \left(4 + \frac{O(1)}{\lg \lg \lg L}\right) \TT(S) + O(1). \]
Taking the maximum over allowable $q$ yields the desired bound.
\end{proof}

\begin{cor}
\label{cor:main-bound}
We have $\TT(L) = O(4^{\log^* L})$.
\end{cor}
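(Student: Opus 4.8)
The plan is to iterate the recurrence of Proposition~\ref{prop:recurrence} all the way down to the base case and verify that the accumulated product of the per-level expansion factors $4 + O(1)/\lg\lg\lg L$ is still only $O(4^{\log^* L})$.

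Set $\ell_0 := L$, $\ell_{k+1} := 2^{(\lg\lg \ell_k)^2}$, and let $d$ be the least index with $\ell_d$ below the base-case threshold of Section~\ref{sec:main} (which, as usual, we enlarge whenever convenient). Two elementary facts about this sequence drive the argument. \emph{(a)} Since $\lg\ell_{k+1} = (\lg\lg\ell_k)^2$, a short calculation gives $\log^* \ell_{k+1} = \log^* \ell_k - 1$ for $\ell_k$ large; hence $d = \log^* L + O(1)$, and $\TT(\ell_d) = O(1)$ because $\ell_d$ is bounded. \emph{(b)} Writing $b_k := \lg\lg\lg \ell_k$, the same calculation gives $b_{k+1} = \lg b_k + O(1)$, and in particular $b_k \geq 2^{\,b_{k+1} - O(1)}$. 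Iterating this last inequality from $k = d-1$ down to $k = 0$, and using that $\ell_{d-1}$ exceeds the threshold (which we enlarge so that $b_{d-1}$ is at least a large absolute constant), the numbers $b_{d-1}, b_{d-2}, \ldots, b_0$ grow at least like a tower of exponentials; consequently
\[
  \sum_{k=0}^{d-1} \frac{1}{b_k} = O(1),
\]
uniformly in $L$.

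Now rewrite Proposition~\ref{prop:recurrence} as $\TT(\ell_k) < (4 + c_1/b_k)\,\TT(\ell_{k+1}) + c_2$ with absolute constants $c_1, c_2 > 0$, and unroll across all $d$ levels:
\[
  \TT(L) < \Bigl(\prod_{k=0}^{d-1}\bigl(4 + \tfrac{c_1}{b_k}\bigr)\Bigr)\TT(\ell_d)
         + c_2 \sum_{j=0}^{d-1} \prod_{k=0}^{j-1}\bigl(4 + \tfrac{c_1}{b_k}\bigr).
\]
For every $j$ we have $\prod_{k=0}^{j-1}(4 + c_1/b_k) = 4^{j}\prod_{k=0}^{j-1}(1 + \tfrac{c_1}{4 b_k}) \leq 4^{j}\exp\bigl(\tfrac{c_1}{4}\sum_{k\geq 0} 1/b_k\bigr) = O(4^{j})$ by fact~\emph{(b)}. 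Summing the resulting geometric series and inserting $\TT(\ell_d) = O(1)$ yields $\TT(L) = O(4^{d})$, which is $O(4^{\log^* L})$ by fact~\emph{(a)}.

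The heart of the matter is the claim $\sum_k 1/b_k = O(1)$ in fact~\emph{(b)}: one has to check that the correction term $O(1)/\lg\lg\lg L$ in the recurrence decays rapidly enough along the recursion that its contributions sum to a constant independent of $L$. Had the per-level factor merely been bounded by some constant $K > 4$, the identical unrolling would only give $O(K^{\log^* L})$; it is exactly the fact that the factor equals $4 + o(1)$, with the error a triple logarithm, that yields the sharp constant $4$. Everything else --- the $\log^* L + O(1)$ bound on the recursion depth, the geometric bookkeeping, and the bounded base case --- is routine.
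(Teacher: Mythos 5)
Your proof is correct, and it takes a genuinely self-contained route where the paper instead invokes an external ``master theorem.'' The paper's proof of Corollary~\ref{cor:main-bound} is a one-liner: it appeals to \cite[Prop.~8]{HvdHL-mul} (or equivalently the argument of \cite{HvdH-vanilla}), whose hypothesis is that the size-reduction map $L \mapsto 2^{(\lg\lg L)^2}$ is dominated by a \emph{logarithmically slow} function. That machinery packages exactly the two facts you isolate: (a) the recursion depth is $\log^* L + O(1)$, and (b) the per-level multiplicative corrections $1 + O(1)/\lg\lg\lg \ell_k$ decay so fast along the recursion that their product is $O(1)$, uniformly in $L$. Your unrolled argument --- writing $\prod_k (4 + c_1/b_k) = 4^d \prod_k(1 + c_1/(4b_k)) \leq 4^d \exp\bigl(\tfrac{c_1}{4}\sum_k 1/b_k\bigr)$ and observing that $b_k \geq 2^{b_{k+1}-O(1)}$ forces the $b_k$ to grow as a tower so $\sum_k 1/b_k = O(1)$ --- is precisely the content of the master theorem in this special case, made explicit. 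It has the advantage of making transparent \emph{why} the sharp constant $4$ survives the accumulation of $\log^* L$ correction factors, which the black-box citation obscures.

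One small caution on your fact~\emph{(a)}: the exact identity $\log^*\ell_{k+1} = \log^*\ell_k - 1$ is fragile --- depending on the base of the logarithm and on where $\ell_k$ sits between consecutive towers, $\log^*$ can occasionally stay flat for one step. (This is not a defect of your overall argument: the number of such flat steps over the whole recursion is $O(1)$, which is exactly what the ``logarithmically slow'' formalism of \cite[\S5]{HvdHL-mul} is engineered to guarantee, so the conclusion $d = \log^* L + O(1)$ still holds.) A cleaner way to justify $d = \log^* L + O(1)$ without claiming step-by-step decrease is to note that $\ell_{k+1} = 2^{(\lg\lg\ell_k)^2} \leq \Phi(\ell_k)$ where $\Phi(x) = 2^{(\log\log x)^3}$ is logarithmically slow, and then invoke the depth bound for such $\Phi$; alternatively, your observation $b_{k+1} = \lg b_k + O(1)$ already shows that $b_k$ drops below any fixed threshold after $\log^* L + O(1)$ steps, which suffices. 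With that repair, everything else --- the unrolling, the geometric bookkeeping, and the $\sum 1/b_k = O(1)$ estimate --- is sound.
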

\begin{proof}
This follows by applying the ``master theorem'' \cite[Prop.~8]{HvdHL-mul}
to the recurrence in Proposition \ref{prop:recurrence}.
Alternatively, it follows by the same method used to deduce
\cite[Cor.~3]{HvdH-vanilla} from \cite[Prop.~2]{HvdH-vanilla}.
The key point is that $2^{(\lg \lg L)^2}$ is dominated by
a~``logarithmically slow'' function of $L$,
such as $\Phi(x) := 2^{(\log \log x)^3}$
(see \cite[\S5]{HvdHL-mul}).
\end{proof}

\begin{rem}
When working with $\theta$-representations, it is possible to multiply
an element of $\ZZ/q\ZZ$ by any power of $\theta$ in linear time,
by simply permuting the coefficients.
In other words, we have available ``fast roots of unity''
in the sense of F\"urer.
Notice however that the algorithm presented in this section
makes no use of this fact!

This raises the question of whether one can design an integer
multiplication algorithm that uses these fast roots in the same way
as in F\"urer's original algorithm, instead of our appeal to Bluestein's trick.
This is indeed possible, and one does obtain a bound of the form
$O(n \lg n \, K^{\log ^* n})$.
In this algorithm, instead of the running time being dominated by the
short transforms, it is dominated by the twiddle factor multiplications,
just as in F\"urer's algorithm.
Unfortunately, this leads to a worse value of $K$, because of the implied
constant in Proposition \ref{prop:arithmetic}.
\end{rem}

\section{Integer multiplication: the top level}
\label{sec:top}

The only complication in building an integer multiplication algorithm
on top of the \textsc{Transform} routine is ensuring that the precomputations
do not dominate the complexity.
We achieve this by means of a multivariate Kronecker-style splitting, as follows.

\begin{proof}
[Proof of Theorem \ref{thm:main}]
Suppose that we wish to compute the product of two $n$-bit integers $u$ and $v$,
for some sufficiently large $n$.
Let $b := \lg n$ and $t := \lceil n/b \rceil^{1/6}$, so that $t^6 b \geq n$.
Decompose $u$ into $t^6$ chunks of $b$ bits,
say $u = u_0 + u_1 2^b + \cdots + u_{t^6 - 1} 2^{(t^6-1)b}$
where $0 \leq u_i < 2^b$ for each $i$, and similarly for~$v$.
Let
 \[ U(x_0, \ldots, x_5) := \sum_{i_0=0}^{t-1} \cdots \sum_{i_5=0}^{t-1} u_{i_0 + t i_1 + \cdots + t^5 i_5} x_0^{i_0} \cdots x_5^{i_5} \in \ZZ[x_0, \ldots, x_5], \]
so that $u = U(2^b, 2^{tb}, \ldots, 2^{t^5 b})$,
and define $V(x_0, \ldots, x_5)$ similarly.
The product $UV$ has degree less than $2t$ in each variable,
so at most $64 t^6$ terms altogether,
and its coefficients are bounded by $2^{2b} n \leq 4n^3$.
We may therefore reconstruct $uv$ from $UV$ using a straightforward
overlap-add procedure
(essentially, evaluating at $(2^b, 2^{tb}, \ldots, 2^{t^5 b})$)
in $O(t^6 \lg n) = O(n)$ bit operations.

Now we consider the computation of $UV$.
Let $L$ be the unique power of two in the interval $2t \leq L < 4t$;
then it suffices to compute the product $UV$ in the ring
$\ZZ[x_0, \ldots, x_5]/(x_0^L - 1, \ldots, x_5^L - 1)$.

For $i = 0, \ldots, 18$, let $q_i$ be the least prime such that
$q_i = 1 \pmod L$ and $q_i = i \pmod{19}$.
Then the $q_i$ are distinct, and by Linnik's theorem they satisfy
$q_i = O(L^{5.2}) = O(t^{5.2}) = O(n^{0.9})$,
so we may locate the $q_i$ in $n^{0.9 + o(1)}$ bit operations,
and they certainly satisfy \eqref{eq:lgq-bound}.
Moreover, for large $n$ we have
$q_0 \cdots q_{18} > L^{19} > 2^{19} t^{19} \geq 2^{19} (n / \lg n)^{19/6} > 4n^3$,
so to compute $UV$ it suffices to compute $UV \pmod{q_i}$ for each $i$
and then reconstruct $UV$ by the Chinese remainder theorem.
The cost of this reconstruction is
$(\lg n)^{1+o(1)}$ bit operations per coefficient,
so $(n / \lg n) (\lg n)^{1+o(1)} = n (\lg n)^{o(1)}$ altogether.

We have therefore reduced to the problem of computing a product in the ring
$(\ZZ/q_i\ZZ)[x_0, \ldots, x_5]/(x_0^L - 1, \ldots, x_5^L - 1)$
for each $i = 0, \ldots, 19$.
To do this, we use \textsc{Transform} to perform forward DFTs of length $L$
with respect to a suitable primitive $L$-th root of unity in $\ZZ/q_i\ZZ$,
for each variable $x_0, \ldots, x_5$ successively;
then we multiply pointwise in $\ZZ/q_i\ZZ$;
finally we perform inverse DFTs and scale the results.
The necessary precomputations for each prime
(finding a suitable root of unity and computing the appropriate
tuple $\PP(q_i, m_i, \theta_i)$) require only $q_i^{1+o(1)} = n^{0.9+o(1)}$
bit operation per prime.
The total cost of the pointwise multiplications is $n (\lg n)^{o(1)}$.
The total number of calls to \textsc{Transform} for each prime is $12 L^5$,
so by Corollary \ref{cor:main-bound} we obtain
\begin{align*}
 \Mint(n) & = O(L^5 \textstyle \sum_{i=0}^{18} \TT(L, q_i)) + n (\lg n)^{o(1)} \\
          & = O(L^6 \textstyle \sum_{i=0}^{18} \TT(L) \lg L \lg q_i) + n (\lg n)^{o(1)} \\
          & = O((n/\lg n) 4^{\log^* L} \lg n \lg n) + n (\lg n)^{o(1)} \\
          & = O(n \lg n \, 4^{\log^* n}).  \qedhere 
\end{align*}
\end{proof}


\bibliography{latticemul}

\end{document}